\newtheoremstyle{bfnote}
{}{}
{}{}
{\bfseries}{.}
{ }
{\thmname{#1}\thmnumber{ #2}\thmnote{ (#3)}}
\theoremstyle{bfnote}
\newtheorem{remark}{Remark}
\newcommand{\sd}{\mathit{SD}}
\newenvironment{profile}{\medmuskip=0mu\relax
	\thickmuskip=1mu\relax
	\tabular}{\endtabular\smallskip}
\newcommand{\ssucc}{\vartriangleright}
\newcolumntype{L}[1]{>{\raggedright\let\newline\\\arraybackslash\hspace{0pt}}m{#1}}
\newcolumntype{C}[1]{>{\centering\let\newline\\\arraybackslash\hspace{0pt}}m{#1}}
\newcolumntype{R}[1]{>{\raggedleft\let\newline\\\arraybackslash\hspace{0pt}}m{#1}}
\theoremstyle{plain}
\newtheorem{lemma}{Lemma}
\renewenvironment{proof}{\noindent\textbf{Proof.}}{\hfill\qed\medskip}
\newenvironment{proofsketch}{\noindent\textbf{Proof sketch.}}{\hfill\qed\medskip}
\title[Strategyproof Social Decision Schemes on Super Condorcet Domains]{Strategyproof Social Decision Schemes\\ on Super Condorcet Domains}
\author{Felix Brandt}
\affiliation{
  \institution{Technical University of Munich}
  \city{Munich}
  \country{Germany}}
\email{brandtf@in.tum.de}
\author{Patrick Lederer}
\affiliation{
  \institution{Technical University of Munich}
  \city{Munich}
  \country{Germany}}
\email{ledererp@in.tum.de}
\author{Sascha Tausch}
\affiliation{
  \institution{Technical University of Munich}
  \city{Munich}
  \country{Germany}}
\email{sascha.tausch@tum.de}
\begin{abstract}
One of the central economic paradigms in multi-agent systems is that agents should not be better off by acting dishonestly. In the context of collective decision-making, this axiom is known as strategyproofness and turns out to be rather prohibitive, even when allowing for randomization. In particular, Gibbard's random dictatorship theorem shows that only rather unattractive social decision schemes (SDSs) satisfy strategyproofness on the full domain of preferences. In this paper, we obtain more positive results by investigating strategyproof SDSs on the Condorcet domain, which consists of all preference profiles that admit a Condorcet winner. In more detail, we show that, if the number of voters $n$ is odd, every strategyproof and non-imposing SDS on the Condorcet domain can be represented as a mixture of dictatorial SDSs and the Condorcet rule (which chooses the Condorcet winner with probability $1$). Moreover, we prove that the Condorcet domain is a maximal connected domain that allows for attractive strategyproof SDSs if $n$ is odd as only random dictatorships are strategyproof and non-imposing on any sufficiently connected superset of it.
We also derive analogous results for even $n$ by slightly extending the Condorcet domain. 
Finally, we also characterize the set of group-strate\-gyproof and non-imposing SDSs on the Condorcet domain and its supersets. These characterizations strengthen Gibbard's random dictatorship theorem and establish that the Condorcet domain is essentially a maximal domain that allows for attractive strategyproof SDSs.
\end{abstract}
\begin{document}

	\pagestyle{fancy}
	\fancyhead{}
	
\maketitle 

\section{Introduction}

Strategyproofness---no agent should be better of by acting dishon\-estly---is one of the central economic paradigms in multi-agent systems \cite[][]{NRTV07a,ShLe08a,BCE+14a}.
An important challenge for such systems is the identification of socially desirable outcomes by letting the agents cast votes that represent their preferences over the possible alternatives. A multitude of theorems in economic theory have shown that even rather basic properties of voting rules cannot be satisfied simultaneously.
In this context, strategyproofness is known to be a particularly restrictive axiom. This is exemplified by the Gibbard- atterthwaite theorem which states that dictatorships are the only deterministic voting rules that satisfy strategyproofness and non-imposition (i.e., every alternative is elected for some preference profile). Since dictatorships are not acceptable for most applications, this result is commonly considered an impossibility theorem. 

One of the most successful escape routes from the Gibbard- Satterthwaite impossibility is to restrict the domain of feasible preference profiles. For instance, \citet{Moul80a} prominently showed that there are attractive strategyproof voting rules on the domain of single-peaked preference profiles, and various other restricted domains of preferences have been considered since then \citep[e.g.,][]{BGS93a,PeSt99a,Sapo09a,ELP17a}. The idea behind domain restrictions is that the voters' preferences often obey structural constraints and thus, not all preference profiles are likely or plausible. A particularly significant constraint is the existence of a Condorcet winner which is an alternative that is favored to every other alternative by a majority of the voters. Apart from its natural appeal, this concept is important because there is strong empirical evidence that real-world elections usually admit Condorcet winners \citep[][]{RGMT06a,Lasl10a,GeLe11a}. This motivates the study of the Condorcet domain which consists precisely of the preference profiles that admit a Condorcet winner.
Note that the Condorcet domain is a superset of several important domains such as those of single-peaked and single-dipped preferences when the number of voters is odd. There are several results showing the existence of attractive strategyproof voting rules on the Condorcet domain.
In particular, \citet{CaKe03a} characterize the Condorcet rule, which always picks the Condorcet winner, as the only strategyproof, non-imposing, and non-dictatorial voting rule on the Condorcet domain if the number of voters is odd. 

In this paper, we focus on \emph{randomized} voting rules, so-called social decision schemes (SDSs). \citet{Gibb77a} has shown that randomization unfortunately does not allow for much more leeway beyond the negative consequences of the Gibbard-Satterthwaite theorem: random dictatorships, which select each voter with a fixed probability and elect the favorite alternative of the chosen voter, are the only SDSs on the full domain that satisfy strategyproofness and non-imposition (which in the randomized setting requires that every alternative is chosen with probability $1$ for some preference profile). Thus, these SDSs are merely ``mixtures of dictatorships''. 

{\fussy
In order to circumvent this negative result, we are interested in large domains that allow for strategyproof and non-imposing SDSs apart from random dictatorships. A natural candidate for this is the Condorcet domain and, indeed, we show that the Condorcet domain is essentially a maximal domain that allows for strategyproof, non-imposing, and ``non-randomly dictatorial'' social choice. In more detail, we prove that, if the number of voters $n$ is odd, every strategyproof and non-imposing SDS on the Condorcet domain can be represented as a mixture of dictatorial SDSs and the Condorcet rule (which chooses the Condorcet winner with probability $1$). This result entails that the Condorcet rule is the only strategyproof, non-imposing, and completely ``non-randomly dictatorial'' SDS on the Condorcet domain for odd $n$. Moreover, we show that, if $n$ is odd, the Condorcet domain is a maximal domain that allows for strategyproof and non-imposing SDSs other than random dictatorships. This theorem highlights the importance of Condorcet winners for the existence of attractive strategyproof SDSs.}

Unfortunately, our results for the Condorcet domain fail if the number of voters $n$ is even because, in this case, a single voter cannot change the Condorcet winner. For extending our results to an even number of voters, we consider tie-breaking Condorcet domains, which contain all preference profiles that have a Condorcet winner after majority ties are broken according to a fixed tie-breaking order. Tie-breaking Condorcet domains are supersets of the Condorcet domain for even $n$, and we derive analogous results for these domains as for the Condorcet domain: if $n$ is even, only mixtures of random dictatorships and the tie-breaking Condorcet rule (which chooses the Condorcet winner after the majority ties have been broken) are strategyproof and non-imposing on these domains, and only random dictatorships satisfy these properties on connected supersets. Finally, we also characterize the set of group-strategyproof and non-imposing SDSs on the Condorcet domain and most of its supersets independently of the parity of $n$: while the Condorcet rule satisfies these axioms on the Condorcet domain, only dictatorships are able to do so on most of its superdomains.

\sloppy
In summary, our results demonstrate two important insights: \emph{(i)} the Condorcet domain is essentially a maximal domain that allows for strategyproof, non-randomly dictatorial, and non-imposing SDSs, and \emph{(ii)} the (deterministic) Condorcet rule is the most appealing strategyproof voting rule on this domain, even if we allow for randomization. Our characterizations can also be seen as attractive complements to classic negative results for the full domain, whereas our results for supersets of the (tie-breaking) Condorcet domain significantly strengthen these negative results. In particular, our theorems imply statements by \citet{Barb79a} and \citet{CaKe03a} as well as the Gibbard-Satterthwaite theorem \citep{Gibb73a,Satt75a} and the random dictatorship theorem \citep{Gibb77a}. A more detailed comparison between our results to these classic theorems is given in \Cref{tab:FDvsCD}.

\section{Related Work}

Restricting the domain of preference profiles in order to circumvent classic impossibility theorems has a long tradition and remains an active research area to date. In particular, the existence of attractive deterministic voting rules that satisfy strategyproofness has been shown for a number of domains. Classic examples include the domains of single-peaked \citep{Moul80a}, single-dipped \citep{BBM12b}, and single-crossing \citep{Sapo09a} preference profiles. More recent positive results focus on broader but more technical domains such as the domains of multi-dimensionally single-peaked or semi single-peaked preference profiles \citep[e.g.,][]{BGS93a,NePu03a,CSS13a,Reff15a}. On the other hand, domain restrictions are also used to strengthen impossibility results by proving them for smaller domains \citep[e.g.,][]{ACS03a,Sato10a,GoRo18a}. In more recent research, the possibility and impossibility results converge by giving precise conditions under which a domain allows for strategyproof and non-dictatorial deterministic voting rules \citep[][]{ChSe11a,CSS13a,RoSt19a,ChZe21a}. 

While similar results have also been put forward for SDSs, this setting is not as well understood. For instance, \citet{EPS02a} have shown the existence of attractive strategyproof SDSs on the domain of single-peaked preference profiles \citep[see also][]{PRSS14a,PyUn15a}. 
The existence of strategyproof and non-\-im\-po\-sing SDSs other than random dictatorships has also been investigated for a variety of other domains \citep[][]{PRSS17a,RoSa20a,PRS19a}. Following a more general approach, \citet{CSZ14a} and \citet{ChZe18a} identify criteria for deciding whether a domain admits such SDSs.

The strong interest in restricted domains also led to the study of many computational problems for restricted domains \citep[e.g.,][]{Coni09b,FHHR11a,BCW13a,BBHH15a,ELP16a,Pete17c,PeLa20a}. For instance, \citet{BCW13a} give an algorithm for recognizing whether a preference profile is single-crossing. Note that for the Condorcet domain, this problem can be solved efficiently as it is easy to verify the existence of a Condorcet winner.

Finally, observe that all aforementioned results are restricted to \emph{Cartesian} domains, i.e., domains of the form $\mathcal{D}=\mathcal{X}^N$, where $\mathcal{X}$ is a set of preference relations. However, the Condorcet domain is not Cartesian. In this sense, the only results directly related to ours are the ones by \citet{CaKe03a} and their follow-up work \citep{Merr11a,CaKe15a,CaKe16a}. These papers can be seen as predecessors of our work since they investigate strategyproof deterministic voting rules on the Condorcet domain. In particular, our results extend the results by \citet{CaKe03a} in multiple important ways: we allow for randomization, we explore the case of even $n$ by slightly extending the domain, we demonstrate the boundary of the possibility results, and we analyze the consequences of group-strategyproofness.

\section{Preliminaries}

Let $N=\{1,\dots, n\}$ denote a finite set of voters and $A=\{a,b,\dots\}$ be a finite set of $m$ alternatives. Throughout the paper, we assume that there are $n\geq 3$ voters and $m\geq 3$ alternatives. Every voter $i\in N$ is equipped with a \emph{preference relation} $\succ_i$ which is a complete, transitive, and anti-symmetric binary relation on $A$. We define $\mathcal{R}$ as the set of all preference relations on $A$. A \emph{preference profile} $R\in \mathcal{R}^N$ consists of the preference relations of all voters $i\in N$. A domain of preference profiles $\mathcal{D}$ is a subset of the full domain $\mathcal{R}^N$. When writing preference profiles, we represent preference relations as comma-separated lists and indicate the set of voters who share a preference relation directly before the preference relation. Finally, we use ``$\dots$'' to indicate that the missing alternatives can be ordered arbitrarily. 
For instance, $\{1,2\}\colon a,b,c,\dots$ means that voters $1$ and $2$ prefer $a$ to $b$ to $c$ to all remaining alternatives, which can be ordered arbitrarily. We omit the brackets for singleton sets. 

The main object of study in this paper are \emph{social decision schemes (SDSs)} which are voting rules that may use randomization to determine the winner of an election. More formally, an SDS maps every preference profile $R$ of a domain $\mathcal{D}$ to a lottery over the alternatives that determines the winning chance of every alternative. A \emph{lottery} $p$ is a probability distribution over the alternatives, i.e., $p(x)\geq 0$ for all $x\in A$ and $\sum_{x\in A} p(x)=1$. We define $\Delta(A)$ as the set of all lotteries over $A$. Formally, an SDS on a domain $\mathcal{D}$ is then a function of the type $f:\mathcal{D}\rightarrow \Delta(A)$. Hence, SDSs are a generalization of deterministic voting rules which choose an alternative with probability $1$ in every preference profile. The term $f(R,x)$ denotes the probability assigned to $x$ by the lottery $f(R)$. For every set $X\subseteq A$ and lottery $p$, we define $p(X)=\sum_{x\in A} p(x)$; in particular $f(R,X)=\sum_{x\in X} f(R,x)$. Finally, an SDS $f:\mathcal{D}\rightarrow\Delta(A)$ is a \emph{mixture of SDSs} $g_1,\dots, g_k$ if there are values $\lambda_i\geq 0$ for $i\in \{1,\dots, k\}$ such that $f(R)=\sum_{i=1}^k\lambda_i g_i(R)$ for all profiles $R\in\mathcal{D}$.

A natural desideratum for an SDS $f:\mathcal{D}\rightarrow\Delta(A)$ is \emph{non-impo\-si\-tion} which requires that there is for every alternative $x\in A$ a profile $R\in\mathcal{D}$ such that $f(R,x)=1$. A prominent strengthening of this property is \emph{ex post} efficiency. For defining this axiom, we say an alternative $x\in A$ Pareto-dominates another alternative $y\in A\setminus \{x\}$ in a profile $R$ if $x \succ_i y$ for all voters $i\in N$. Then, an SDS $f:\mathcal{D}\rightarrow\Delta(A)$ is \emph{ex post efficient} if $f(R,x)=0$ for all alternatives $x\in A$ and profiles $R\in\mathcal{D}$ such that $x$ is Pareto-dominated in $R$.

\subsection{Strategyproofness and Random Dictatorships}

Strategic manipulation is one of the central issues in social choice theory: voters might be better off by voting dishonestly. Since satisfactory collective decisions require the voters' true preferences, SDSs should incentivize honest voting. In order to formalize this, we need to specify how voters compare lotteries over alternatives. The most prominent approach for this is based on (first order) stochastic dominance \citep[e.g.,][]{Gibb77a,EPS02a,PRS19a}. Let the \emph{upper contour set} $U(\succ_i,x)=\{y\in A\colon y \succ_i x\lor y=x\}$ be the set of alternatives that voter $i$ weakly prefers to $x$. Then, (first order) stochastic dominance states that a voter $i$ prefers a lottery $p$ to another lottery $q$, denoted by $p\succsim_i^\sd q$, if $p(U(\succ_i,x)) \geq q(U(\succ_i,x))$ for all $x\in A$. Note that the stochastic dominance relation is transitive but not complete. Using stochastic dominance to compare lotteries is appealing because $p\succsim_i^\sd q$ holds if and only if $p$ guarantees voter $i$ at least as much expected utility than $q$ for every utility function that is ordinally consistent with his preference relation $\succ_i$.

Based on stochastic dominance, we now define strategyproofness: an SDS $f:\mathcal{D}\rightarrow\Delta(A)$ is \emph{strategyproof} if $f(R)\succsim_i^\sd f(R')$ for all preference profiles $R,R'\in\mathcal{D}$ and voters $i\in N$ such that ${\succ_j}={\succ_j'}$ for all $j\in N\setminus \{i\}$. Less formally, strategyproofness requires that every voter weakly prefers the lottery obtained by acting truthfully to every lottery he could obtain by lying. Conversely, an SDS is called \emph{manipulable} if it is not strategyproof. A convenient property of strategyproofness is that mixtures of strategyproof SDSs are again strategyproof. 

Note that this strategyproof notion has attained significant attention. In particular, \citet{Gibb77a} has shown that only random dictatorships satisfy strategyproofness and non-imposition on the full domain. For defining these functions, we say an SDS $d_i$ is \emph{dictatorial} or a \emph{dictatorship} if it always assigns probability $1$ to the most preferred alternative of voter $i$. Then, a \emph{random dictatorship} $f$ is a mixture of dictatorial SDSs $d_i$. While they are more attractive than dictatorships, random dictatorships are often undesirable because they cannot compromise. We therefore interpret the random dictatorship theorem as a negative result.

Furthermore, strategyproofness is closely related to two properties called localizedness and non-perversity. Both of these axioms are concerned with how the outcome changes if a voter only swaps two alternatives. For making this formal, let $R^{i:yx}$ denote the profile derived from another profile $R$ by only swapping $x$ and $y$ in the preference relation of voter $i$. Note that this definition requires that $x\succ_i y$ and that there is no alternative $z\in A\setminus \{x,y\}$ with $x \succ_i z \succ_i y$. Now, an SDS $f$ on a domain $\mathcal{D}$ is \emph{localized} if $f(R,z)=f(R^{i:yx},z)$ for all distinct alternatives $x,y,z\in A$, voters $i\in N$, and profiles $R, R^{i:yx}\in\mathcal{D}$. Moreover, $f$ is \emph{non-perverse} if $f(R^{i:yx},y)\geq f(R, y)$ for all distinct alternatives $x,y\in A$, voters $i\in N$, and profiles $R,R^{i:yx}\in \mathcal{D}$. More intuitively, if voter $i$ reinforces $y$ against $x$, localizedness requires that the probability assigned to the other alternatives does not change, and non-perversity that the probability of $y$ cannot decrease. \citet{Gibb77a} has shown for the full domain of preferences that the conjunction of localizedness and non-perversity is equivalent to strategyproofness. Furthermore, it is easy to see that every strategyproof SDS satisfies non-perversity and localizedness on every domain. We thus mainly use the latter two axioms in our proofs as they are easier to handle.

Finally, in order to disincentivize \emph{groups} of voters from manipulating, we need a stronger strategyproofness notion: an SDS $f:\mathcal{D}\rightarrow\Delta(A)$ is \emph{group-strategyproof} if for all preference profiles $R,R'\in\mathcal{D}$ and all non-empty sets of voters $I\subseteq N$ with ${\succ_j}={\succ_j'}$ for all $j\in N\setminus I$, there is a voter $i\in I$ such that $f(R)\succsim_i^\sd f(R')$. Conversely, an SDS is \emph{group-manipulable} if it is not group-strategyproof. Note that group-strategyproofness implies strategyproofness.  

\subsection{Super Condorcet Domains}

Since Gibbard's random dictatorship theorem shows that there are no attractive strategyproof SDSs on the full domain, we investigate the Condorcet domain and its supersets with respect to the existence of such functions. In order to define these domains, we first need to introduce some terminology. The \emph{majority margin} $g_{R}(x,y)=|\{i\in N\colon x \succ_i y\}|-|\{i\in N\colon y \succ_i x\}|$ indicates how many more voters prefer $x$ to $y$ in the profile $R$ than vice versa. Based on the majority margins, we define the \emph{Condorcet winner} of a profile $R$ as the alternative $x$ such that $g_{R}(x,y)>0$ for all $y\in A\setminus \{x\}$. Since the existence of a Condorcet winner is not guaranteed, we focus on the \emph{Condorcet domain} $\mathcal{D}_C=\{R\in \mathcal{R}^N\colon \text{there is a Condorcet winner in }R\}$ which contains for the given electorate all profiles with a Condorcet winner. As explained in the introduction, this domain is of interest because real-world elections frequently admit a Condorcet winner. 

A particularly natural SDS on the Condorcet domain is the \emph{Condorcet rule ($\mathit{COND}$)} which always assigns probability $1$ to the Condorcet winner. However, all SDSs defined for the full domain (e.g., random dictatorships, Borda's rule, Plurality rule) are also well-defined for the Condorcet domain and there is thus a multitude of voting rules to choose from.

Note that the Condorcet domain $\mathcal{D}_C$ is not connected with respect to strategyproofness if $n$ is even. To make this formal, we define $\mathcal{D}_C^x$ as the domain of profiles in which alternative $x$ is the Condorcet winner. Then, it is impossible for distinct alternatives $x,y\in A$ that a single voter deviates from a profile $R\in\mathcal{D}_C^x$ to a profile $R'\in \mathcal{D}_C^y$. Indeed, if $R\in\mathcal{D}_C^x$ and $n$ is even, then $g_R(x,y)\geq 2$ and $g_{R'}(x,y)\geq 0$ for all alternatives $y\in A\setminus \{x\}$ and profiles $R'$ that differ from $R$ only in the preference relation of a single voter. This is problematic for our analysis because the choice for a profile $R\in\mathcal{D}_C^x$ has no influence of the choice for a profile $R'\not\in\mathcal{D}_C^x$. When $n$ is even, we will therefore consider the \emph{tie-breaking Condorcet domain} $\mathcal{D}_C^{\ssucc}=\{R\in\mathcal{R}^N\colon \text{there is a Condorcet winner in } (R,\ssucc)\}$, which contains all profiles that have a Condorcet winner after adding a fixed preference relation ${\ssucc}\in \mathcal{R}$. Note that this extra preference relation only breaks majority ties if $n$ is even because $g_R(x,y)\geq 2$ if $g_R(x,y)\neq 0$. In particular, this proves that $\mathcal{D}_C\subseteq\mathcal{D}_C^\ssucc$ if $n$ is even.
An attractive SDS on $\mathcal{D}_C^{\ssucc}$ is the \emph{tie-breaking Condorcet rule ($\mathit{COND}^{\ssucc}$)} which assigns probability $1$ to the Condorcet winner in $(R,\ssucc)$ for all profiles $R\in\mathcal{D}_C^{\ssucc}$.

To show that $\mathcal{D}_C$ and $\mathcal{D}_C^{\ssucc}$ are maximal domains that allow for attractive strategyproof SDSs, we will also consider supersets of them. Formally, we analyze \emph{super Condorcet domains} which are domains $\mathcal{D}$ with $\mathcal{D}_C\subseteq \mathcal{D}$. Just as the Condorcet domain for even $n$, super Condorcet domains can be disconnected. We therefore discuss connectedness notions for domains and introduce ad-paths. An \emph{ad-path} from a profile $R$ to a profile $R'$ in a domain $\mathcal{D}$ is a sequence of profiles $(R^1,\dots, R^l)$ such that $R^1=R$, $R^l=R'$, $R^k\in \mathcal{D}$ for all $k\in \{1,\dots, l\}$, and the profile $R^{k+1}$ evolves out of $R^k$ by swapping two alternatives $x,y\in A$ in the preference relation of a voter $i\in N$, i.e., $R^{k+1}=(R^k)^{i:yx}$ for all $k\in \{1,\dots, l-1\}$. Then, we say that a domain $\mathcal{D}$ is \emph{weakly connected} if there is an ad-path between all profiles $R,R'\in\mathcal{D}$. 
Unfortunately, this condition is too weak to be useful in our analysis and we therefore slightly strengthen it: 
a domain $\mathcal{D}$ is \emph{connected} if it is weakly connected and if for all alternatives $x\in A$ and profiles $R,R'\in\mathcal{D}$ such that $U(\succ_i',x)= U(\succ_i,x)$ for all $i\in N$, there is an ad-path $(R^1, \dots, R^l)$ from $R$ to $R'$ such that $U(\succ_i^{k},x)= U(\succ_i^{k-1},x)$ for all $i\in N$ and $k\in \{2,\dots,l\}$. Less formally, connectedness strengthens weak connectedness by requiring that if an alternative $x$ is at the same position in $R$ and $R'$, then there is an ad-path from $R$ to $R'$ along which $x$ is not moved.

Connectedness is a very mild property and is, e.g., weaker than \citeauthor{Sato13b}'s non-restoration property \citep{Sato13b}. Consequently, many domains, such as the full domain and the domain of single-peaked preferences, satisfy this condition. As we show next, the Condorcet domain is connected for odd $n$ and tie-breaking Condorcet domains are connected for even $n$. 

\begin{restatable}{lemma}{connected}\label{lem:connected}
If $n\geq 3$ is odd, the Condorcet domain $\mathcal{D}_C$ is connected. If $n\geq 4$ is even, the tie-breaking Condorcet domain $\mathcal{D}_C^\ssucc$ is connected for every preference relation ${\ssucc}\in\mathcal{R}$.
\end{restatable}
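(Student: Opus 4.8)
The plan is to build all required ad-paths from two elementary moves that preserve membership in $\mathcal{D}_C$: if $c$ is the Condorcet winner of a profile $R$, then \emph{(i)} swapping $c$ upward with the alternative directly above it keeps $c$ the Condorcet winner, since it only increases the margins $g_R(c,\cdot)$, and \emph{(ii)} swapping any two alternatives distinct from $c$ leaves every margin $g_R(c,\cdot)$ untouched, so $c$ stays the Condorcet winner. Move \emph{(i)} lets me reinforce the winner, move \emph{(ii)} lets me reorganize everything else. For \emph{weak connectedness} I would first send every $R\in\mathcal{D}_C$ to the canonical profile $\bar R_c$ in which its Condorcet winner $c$ sits on top of all voters and the remaining alternatives appear in one fixed reference order (reached by applying \emph{(i)} then \emph{(ii)}). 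It then remains to connect $\bar R_b$ and $\bar R_{b'}$ for distinct $b,b'$: I reinforce $b'$ into second place below $b$ for every voter, so that $b'$ beats every alternative except $b$ by the full margin $n$, and then swap $b$ and $b'$ one voter at a time. Because $n$ is odd, $g(b,b')$ runs through odd values from $n$ to $-n$ and never vanishes, so at each step whichever of $b,b'$ wins their pairwise comparison beats everyone and is the Condorcet winner; once $b'$ is on top I re-sort with \emph{(ii)}.

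For the stronger \emph{connectedness} condition I fix an alternative $x$ and two profiles $R,R'$ with $U(\succ_i,x)=U(\succ_i',x)$ for all $i$. The decisive observation is that whether voter $i$ ranks $y$ above or below $x$ is then the same in every profile sharing this position data, so the margins $g(x,y)$ are \emph{constant} across all such profiles. If $g(x,y)>0$ for all $y$ (Case A), then $x$ is the Condorcet winner of \emph{every} profile with this position data; hence all of them lie in $\mathcal{D}_C$, and since adjacent transpositions among the alternatives above $x$ (respectively below $x$) turn any linear order into any other without touching $x$, I connect $R$ and $R'$ directly. Otherwise (Case B) some $z$ satisfies $g(x,z)<0$, and since the Condorcet winner must beat $x$ it always lies in the fixed set $Z=\{z:g(x,z)<0\}$. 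Here I route $R$ and $R'$ through canonical profiles $\hat R_c,\hat R_{c'}$ — where $c,c'$ are the winners of $R,R'$ — obtained by reinforcing the respective winner as far up as $x$ permits and then re-sorting the rest with \emph{(ii)}, and I connect $\hat R_c$ to $\hat R_{c'}$.

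The crux, and the step I expect to be the main obstacle, is this last transition from $\hat R_c$ to $\hat R_{c'}$ without moving $x$, because swapping $c$ and $c'$ risks landing in a profile where neither is the Condorcet winner and $\mathcal{D}_C$ is left. I plan to defuse this as follows. First I push $c'$ upward as far as possible short of overtaking $c$, so that in every voter $c'$ lies above every alternative other than $c$; this does not change $g(c',y)$ for any $y\neq c$, and that value is now the maximum attainable margin over all profiles with the given position data. Since $R'$ witnesses $c'$ as a Condorcet winner, this maximal margin is at least $g_{R'}(c',y)>0$, so after the reinforcement $c'$ already beats \emph{every} alternative except possibly $c$. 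Now I swap $c$ and $c'$, where they are adjacent, one same-side voter at a time: these swaps alter only $g(c,c')$, leaving intact that $c$ beats every alternative other than $c'$ and that $c'$ beats every alternative other than $c$. The margin $g(c,c')$ starts at its maximal (positive) value and is driven to its minimal (negative) value, which has the opposite sign precisely because both $c$ and $c'$ can be Condorcet winners; as $n$ is odd it passes through odd values only and crosses zero without ever being it, so at every intermediate profile exactly one of $c,c'$ wins their comparison and is the Condorcet winner. The path thus stays in $\mathcal{D}_C$, never moves $x$, and ends at $\hat R_{c'}$ after a final re-sort.

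For even $n$ I would run the identical argument on the profile $(R,\ssucc)$, treating the tie-breaker $\ssucc$ as one additional, \emph{fixed} voter so that the electorate has odd size $n+1$ and the margins $g_{(R,\ssucc)}(x,y)=g_R(x,y)\pm 1$ are again always odd and nonzero. All reinforcing and swapping moves are performed only among the $n$ genuine voters, which still suffices: reinforcing an alternative to the top of all genuine voters makes its margins at least $n-1>0$, so it becomes the Condorcet winner of $(R,\ssucc)$, and the position data of $x$ again fixes $g_{(R,\ssucc)}(x,\cdot)$. The odd size of $(R,\ssucc)$ supplies exactly the ``no ties'' guarantee that drives the odd-$n$ proof, so Cases A and B and the critical transition go through, establishing that $\mathcal{D}_C^{\ssucc}$ is connected.
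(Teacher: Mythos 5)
Your proposal is correct and takes essentially the same route as the paper's proof: the reinforce-then-swap construction for weak connectedness, the canonical intermediate profiles in which $c$ and $c'$ are pushed maximally up subject to the fixed position of $x$ (your pushed profile coincides with the paper's $R^1$ in its Step 2.3), the same-side voter-by-voter $c$/$c'$ swaps with the odd margin $g(c,c')$ crossing zero without vanishing, and the treatment of $\ssucc$ as a fixed $(n{+}1)$-st voter for even $n$. One phrasing slip---after the push, $c'$ lies above every alternative other than $c$ \emph{and}, for voters with $x\succ_i c'$, other than $x$---does not affect your argument, since your maximal-margin reasoning is the correct one and $g(c',x)$ is constant across all profiles with the given position data and positive because $R'$ witnesses it.
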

\begin{proofsketch}
The proof for $\mathcal{D}_C$ and $\mathcal{D}_C^\ssucc$ work essentially the same, and we thus focus on $\mathcal{D}_C$ in this proof sketch. Hence, assume that $n\geq 3$ is odd, consider two profiles $R, R'\in\mathcal{D}_C$, and let $c$ and $c'$ be the respective Condorcet winners. We first show that $\mathcal{D}_C$ is weakly connected and thus need to construct an ad-path from $R$ to $R'$. If $c=c'$, we can simply start at $R$ by reinforcing $c$ until it unanimously top-ranked, reorder the remaining alternatives according to $R'$, and weaken $c$ until it is in the correct position. If $c\neq c'$, we can proceed similarly: starting at $R$, we let all voters first push up $c$ until it is their best alternative, and then let all voters push up $c'$ until it is their second best alternative. We can now change the Condorcet winner without leaving the Condorcet domain by letting the voters swap $c$ and $c'$ one after another. After this, $c'$ is the Condorcet winner and we can now apply the same construction as for the case $c=c'$ to go from this intermediate profile to $R'$. For showing that $\mathcal{D}_C$ is connected, we also need to construct ad-paths from $R$ to $R'$ that do not move $x$ for every alternative $x\in A$ with $U(\succ_i,x)=U(\succ_i',x)$ for all $i\in N$. Since the construction of these ad-paths requires a case distinction with respect to whether $x=c$ and $c=c'$, we defer it to the appendix.
\end{proofsketch}
    
\section{Characterizations}

We are now ready to present our characterizations of strategyproof and group-strategyproof SDSs on super Condorcet domains. In more detail, we first characterize the set of strategyproof and non-imposing SDSs on the Condorcet domain for and odd number of voters $n$ in \Cref{subsec:cond}. Moreover, we also demonstrate that the Condorcet domain is a maximal connected domain that allows for strategyproof and non-imposing SDSs apart from random dictatorships. Next, we derive analogous results for the tie-breaking Condorcet domain if $n$ is even in \Cref{subsec:condeven}. Finally, in \Cref{subsec:GSP} we revisit the Condorcet domain and characterize the set of group-strategyproof and non-imposing SDSs, independently of the parity of $n$. Due to space restrictions, we defer the involved proofs of \Cref{lem:DCx} and \Cref{thm:condeven,thm:groupCond} to the appendix.

\subsection{Results for the Condorcet domain}\label{subsec:cond}

In this section, we analyze the set of strategyproof and non-impo\-sing SDSs on the Condorcet domain and its supersets for the case that $n$ is odd. 
In more detail, we will show that, if $n$ is odd, only mixtures of random dictatorships and the Condorcet rule are strategyproof and non-imposing on the Condorcet domain. 
As a byproduct, we also derive a characterization of the Condorcet rule as the only strategyproof, non-imposing, and ``completely non-randomly dictatorial'' SDS on $\mathcal{D}_C$. 
Moreover, we will also prove that, if $n$ is odd, only random dictatorships are strategyproof and non-imposing on every connected superset of the Condorcet domain, thus demonstrating that the Condorcet domain is an inclusion-maximal connected domain that allows for attractive strategyproof SDSs. 

Before proving these claims, we discuss two auxiliary lemmas. First, we show that, if $n$ is odd, every strategyproof and non-im\-po\-sing SDS on a connected super Condorcet domain is also \emph{ex post} efficient. Note that analogous claims are known for a number of domains, e.g., the full domain and the domain of single peaked preferences \citep{Gibb77a,EPS02a}.

\begin{lemma}\label{lem:PO}
     Assume $n\geq 3$ is odd, and let $\mathcal{D}\subseteq \mathcal{R}^N$ denote a connected domain with $\mathcal{D}_C\subseteq \mathcal{D}$. Every strategyproof and non-imposing SDS on $\mathcal{D}$ is \emph{ex post} efficient.
\end{lemma}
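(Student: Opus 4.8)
The plan is to prove $f(R,x)=0$ whenever $x$ is Pareto-dominated in a profile $R\in\mathcal{D}$, using that strategyproofness implies localizedness and non-perversity (as noted in the preliminaries). Since $\mathcal{D}$ is connected and contains $\mathcal{D}_C$, I expect to reduce the general case to a few canonical profiles where non-imposition gives us exact probabilities, and then transport those probabilities along ad-paths. The natural starting point is a profile where \emph{only one} alternative, say $c$, is unanimously top-ranked; such a profile lies in $\mathcal{D}_C$ (with $c$ the Condorcet winner), and I claim non-imposition forces $f$ to put probability $1$ on $c$ there. Indeed, by non-imposition there is some profile assigning probability $1$ to $c$; I would move from that profile to the unanimous-top-$c$ profile along an ad-path, arguing via non-perversity that reinforcing $c$ (pushing it up in each voter's ranking) can only increase $f(\cdot,c)$, so it stays at $1$. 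Here I must be careful to keep the intermediate profiles inside $\mathcal{D}_C\subseteq\mathcal{D}$, which is exactly why the connectedness hypothesis is needed.

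Having established that any profile with a unanimous top alternative $c$ gives $f(\cdot,c)=1$, the second step is to handle an arbitrary profile $R$ in which some alternative $y$ is Pareto-dominated by some $x$ (i.e.\ $x\succ_i y$ for all $i$). I would exploit the fact that $x$ being unanimously preferred to $y$ is preserved under a suitably chosen ad-path that reinforces $x$ to the top in every vote one voter at a time. Each such step is a swap of the form moving $x$ upward, so by localizedness the probability of $y$ is unchanged along these swaps (since $y$ is not one of the two alternatives being swapped, as long as $x$ jumps over alternatives other than $y$). The subtle point is that $x$ must jump over $y$ at some stage to reach the top; but since $x\succ_i y$ already holds for every $i$, we never need to swap $x$ with $y$, so $y$'s probability is untouched throughout, and we arrive at a profile where $x$ is unanimous top. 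By the first step $f$ there assigns $x$ probability $1$, hence assigns $y$ probability $0$, and localizedness propagates $f(\cdot,y)=0$ back to the original $R$.

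The main obstacle I anticipate is twofold. First, I must verify that the reinforcing ad-paths I use stay within $\mathcal{D}$: reinforcing the Pareto-dominant alternative $x$ keeps all the original majority comparisons intact except those involving $x$, which only improve for $x$, so if $R\in\mathcal{D}_C$ it stays in $\mathcal{D}_C$, and for general $R\in\mathcal{D}$ I should instead route through profiles I can control, invoking the defining property of connectedness to obtain an ad-path along which $y$ (the alternative whose probability I want to keep fixed) is never moved. Second, the interplay between localizedness (probabilities of \emph{other} alternatives are fixed under a swap) and non-perversity (the reinforced alternative weakly gains) must be applied in the correct direction at each elementary swap; getting an inequality backwards would break the argument. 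I would organize the proof so that every elementary step either leaves $y$ in place (localizedness, giving equality of $f(\cdot,y)$) or reinforces the eventual winner $x$ (non-perversity, giving the monotone increase that pins $f(\cdot,x)=1$), and then combine these to conclude $f(R,y)=0$.
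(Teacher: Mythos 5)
Your overall architecture matches the paper's: transport $f(\cdot,y)$ from $R$ to a profile where the dominating alternative $x$ is unanimously top-ranked (your Step 2 is essentially the paper's argument, and your fix of routing through connectedness applied to $y$ — whose upper contour sets are unchanged when $x$ is raised, since $x\succ_i y$ already holds — is exactly right), and separately show that unanimous-top profiles receive probability $1$. The genuine gap is in your Step 1. You claim that from the non-imposition witness $R^c$ with $f(R^c,c)=1$ you can reach a unanimous-top-$c$ profile ``along an ad-path, arguing via non-perversity that reinforcing $c$ can only increase $f(\cdot,c)$,'' and that connectedness keeps the intermediate profiles in the domain. Connectedness guarantees neither thing: it gives \emph{some} ad-path between profiles of $\mathcal{D}$, with no monotonicity, and its second clause (paths that never move a designated alternative) is inapplicable to $c$ here because $c$'s upper contour sets differ at the two endpoints — raising $c$ to the top is the whole point. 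Along a generic ad-path, $c$ may be \emph{weakened} at some step, and then non-perversity only says the swapped-up alternative weakly gains, so $f(\cdot,c)$ may strictly drop below $1$ and never recover. Nor can you simply push $c$ up swap-by-swap from $R^c$: if $c$ is not the Condorcet winner in $R^c$ (and $R^c$ need not even lie in $\mathcal{D}_C$), the intermediate profiles of that naive path may have no Condorcet winner and need not belong to $\mathcal{D}$ at all.

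This missing case is precisely where the paper does its real work, and — tellingly — it is where the hypothesis that $n$ is odd enters, which your outline never uses. When $x$ is not the Condorcet winner in the witness profile $R^3$, oddness guarantees an alternative $z$ and a \emph{strict} majority $I$ with $z\succ_i^3 x$ for all $i\in I$. Raising $z$ to the top of the voters in $I$ does not change any voter's upper contour set of $x$ (everything $z$ jumps over already lies above $x$), so connectedness applied to $x$ plus localizedness transports $f(\cdot,x)=1$ to a profile $R^5\in\mathcal{D}_C$ in which $z$ is the Condorcet winner. From there the paper constructs explicit in-$\mathcal{D}_C$ paths that only reinforce $x$: first lift $x$ to second place behind $z$ (with $z$ still majority-top, so the Condorcet winner is preserved), then swap $x$ and $z$ voter by voter, where oddness again ensures that at every step one of $x,z$ is top-ranked by a strict majority. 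Non-perversity along these reinforcing steps pins $f(\cdot,x)=1$ through to the unanimous-top-$x$ profile. Without this detour through an intermediate Condorcet winner, your Step 1 does not go through, and with it the lemma's restriction to odd $n$ becomes visible — as it must, since the paper handles even $n$ by a modified argument on the tie-breaking domain.
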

\begin{proof}
    Assume $n\geq 3$ is odd and let $\mathcal{D}$ denote a connected domain with $\mathcal{D}_C\subseteq \mathcal{D}$. Moreover, consider a strategyproof and non-imposing SDS $f$ on $\mathcal{D}$ and assume for contradiction that $f$ fails \emph{ex post} efficiency. This means that there are a profile $R^1\in\mathcal{D}$ and two alternatives $x,y\in A$ such that $x \succ_i y$ for all voters $i\in N$ but $f(R^1,y)>0$. Now, consider the profile $R^2$ derived from $R^1$ by making $x$ into the best alternative of every voter $i\in N$. Clearly, $R^2\in\mathcal{D}_C\subseteq \mathcal{D}$ because $x$ is the Condorcet winner in $R^2$. Since $U(\succ_i^1, y)=U(\succ_i^2,y)$ for all voters $i\in N$, there is by connectedness an ad-path from $R^1$ to $R^2$ along which $y$ is never swapped. Hence, we infer that $f(R^2,y)=f(R^1,y)>0$ and $f(R^2,x)<1$ by repeatedly applying localizedness along this ad-path.
    
    Next, let $R^3\in\mathcal{D}$ denote a profile such that $f(R^3,x)=1$; such a profile exists by non-imposition. If $x$ is the Condorcet winner in $R^3$, we can reinforce this alternative until it is top-ranked by every voter without leaving the domain $\mathcal{D}$. This leads to a profile $R^4$ in which $x$ is unanimously top-ranked, and non-perversity shows that $f(R^4,x)\geq f(R^3,x)=1$. Finally, we can again use the connectedness of $\mathcal{D}$ to find an ad-path from $R^4$ to $R^2$ along which $x$ is never swapped. Hence, localizedness requires that $f(R^2,x)=f(R^4,x)= 1$, which contradicts our previous observation. 
    
    As second case, suppose that $x$ is not the Condorcet winner in $R^3$. Since $n$ is odd, there is an alternative $z\in A\setminus \{x\}$ and a set of voters $I$ with $|I|>\frac{n}{2}$ such that $z\succ_i^3 x$ for all $i\in I$. Now, consider the profile $R^5$ derived from $R^3$ by making $z$ into the best alternative of the voters $i\in I$. Clearly, $R^5\in\mathcal{D}_C\subseteq\mathcal{D}$ because more than half of the voters top-rank $z$. Moreover, it holds that $U(\succ_i^3,x)=U(\succ_i^5,x)$ for all $i\in N$, and thus connectedness and localizedness imply that $f(R^5,x)=1$. Next, let $R^6$ denote the profile derived from $R^5$ by making $x$ into the best alternative of the voters $i\in N\setminus I$ and into the second best one of the voters $i\in I$. We can transform $R^5$ into $R^6$ by repeatedly reinforcing $x$, and $z$ stays always the Condorcet winner as it is top-ranked by the voters $i\in I$. Hence, $R^6\in\mathcal{D}$ and non-perversity shows that $f(R^6,x)=1$. Finally, we let the voters $i\in I$ swap $x$ and $z$ one after another. Since all voters top-ranks $x$ or $z$ in $R^6$, one of the alternatives is always top-ranked by more than half of the voters during these steps. Hence, we do not leave $\mathcal{D}$. This process terminates in a profile $R^7$ in which all voters top-rank $x$, and non-perversity shows that $f(R^7,x)=1$. This contradicts again that $f(R^2,x)<1$ as there is an ad-path from $R^7$ to $R^2$ along which we do not move $x$. Since we have a contradiction in both cases, $f$ must be \emph{ex post} efficient. 
\end{proof}

\Cref{lem:PO} is helpful for our analysis because \emph{ex post} efficiency---in contrast to non-imposition---is inherited to subdomains. Since an analogous claim also holds for strategyproofness, we next investigate the set of strategyproof and \emph{ex post} efficient SDSs in the domain $\mathcal{D}_C^x$ where alternative $x$ is always the Condorcet winner. 

\begin{restatable}{lemma}{DCx}\label{lem:DCx}
    Fix an alternative $a\in A$ and let $f$ denote a strategyproof and \emph{ex post} efficient SDS on a super Condorcet domain. There is a random dictatorship $d$ and $\gamma\in \mathbb{R}_{\geq 0}$ such that $f(R)=(1-\gamma)\mathit{COND}(R)+\gamma d(R)$ for all $R\in\mathcal{D}_C^a$. 
\end{restatable}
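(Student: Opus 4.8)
The plan is to adapt Gibbard's random dictatorship argument \citep{Gibb77a} to the subdomain $\mathcal{D}_C^a$, letting the distinguished alternative $a$---which is the Condorcet winner throughout $\mathcal{D}_C^a$---carry the deterministic part of the decomposition. Since $\mathit{COND}(R)$ assigns probability $1$ to $a$ for every $R\in\mathcal{D}_C^a$, the target identity $f(R)=(1-\gamma)\mathit{COND}(R)+\gamma d(R)$ is equivalent to finding weights $\lambda_i\geq 0$ and a constant $\gamma\geq 0$ with $f(R,y)=\gamma\sum_{i\in N\colon y\text{ is the top choice of }i}\lambda_i$ for every $y\in A\setminus\{a\}$ and $R\in\mathcal{D}_C^a$; the probability of $a$ is then forced by $f(R,a)=1-\sum_{y\neq a}f(R,y)$. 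Throughout I may use localizedness and non-perversity in place of strategyproofness, and I may invoke \emph{ex post} efficiency directly since it is an assumption of the lemma.

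As a first reduction I would exploit localizedness to control the position of $a$. Swapping $a$ with an adjacent alternative $w$ in some voter's ranking leaves $f(R,z)$ unchanged for every $z\notin\{a,w\}$; hence, for a fixed target $y\neq a$, the value $f(R,y)$ is invariant under any relocation of $a$ in a voter's preference relation that does not move $a$ across $y$. Consequently $f(R,y)$ depends on $R$ only through the ranking of the alternatives in $A\setminus\{a\}$ and, for each voter, through whether $a$ is ranked above or below $y$. The next step sharpens this to a dependence on top choices only: using \emph{ex post} efficiency together with non-perversity, I would show that $f(R,y)=0$ whenever $y$ is not the top-ranked alternative of any voter, and, more generally, that shifting a single voter's top choice away from or onto $y$ changes $f(R,y)$ by a fixed amount. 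Non-perversity forces this amount to be nonnegative and additive across voters, which is exactly what produces the random-dictatorship form with per-voter weights $\gamma\lambda_i$.

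To extract the parameters, I would read off the products $\gamma\lambda_i$ from the profiles in which voter $i$ alone top-ranks some $y\neq a$ and every other voter top-ranks $a$ (here $a$ is clearly the Condorcet winner, and $f(R,y)$ must equal $\gamma\lambda_i$). The remaining work is to prove that these numbers are \emph{consistent}: that the same $\lambda_i$ and the same $\gamma$ reproduce $f$ on every profile of $\mathcal{D}_C^a$, and that $\gamma$ does not vary from profile to profile. I would establish this by connecting any two profiles through ad-paths and propagating the decomposition step by step, invoking localizedness for the untouched alternatives and non-perversity for the swapped pair, in the spirit of the arguments behind \Cref{lem:connected} and \Cref{lem:PO}.

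The main obstacle, and the reason Gibbard's theorem cannot simply be quoted, is the \emph{domain constraint}: unlike in the full-domain proof, every intermediate profile in these manipulation arguments must retain a Condorcet winner in order to stay inside $\mathcal{D}_C\subseteq\mathcal{D}$, so the free reorderings available on the full domain are no longer admissible. The delicate case arises when some $y\neq a$ is top-ranked by almost half of the voters, so that $a$ is only barely the Condorcet winner and there is little room to move $a$ below $y$ or to rearrange the remaining alternatives. I would navigate this by keeping $a$ among the top two alternatives of a fixed strict majority along every ad-path, which guarantees that $a$ stays the Condorcet winner and hence that the whole path remains in $\mathcal{D}_C^a$; verifying that such constrained paths still suffice to transport the decomposition between all profiles of $\mathcal{D}_C^a$ is where the bulk of the effort lies.
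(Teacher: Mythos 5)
Your proposal correctly identifies the target identity and the central difficulty (intermediate profiles must stay inside the domain), but it has two genuine gaps. First, the structural claims you plan to ``show''---that $f(R,y)$ for $y\neq a$ depends on $R$ only through which voters top-rank $y$, and that moving a single voter's top choice onto or off $y$ changes $f(R,y)$ by a profile-independent, additive amount---are not consequences of \emph{ex post} efficiency plus non-perversity; they are essentially the full content of the random dictatorship theorem. In particular, $f(R,y)=0$ for non-top-ranked $y$ is a \emph{conclusion} of Gibbard's theorem, not something these two axioms deliver directly (a non-top-ranked alternative need not be Pareto-dominated), and proving a tops-only property from localizedness and non-perversity is exactly where Gibbard's full-domain proof does its heavy lifting. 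Your sketch asserts these steps at precisely the point where the work lies, and redoing them under the Condorcet constraint is harder, not easier, than on the full domain. Second, your proposed domain-membership invariant is false: a strict majority holding $a$ among their top two alternatives does \emph{not} guarantee a Condorcet winner. For instance, with $n=5$ and $m=4$, let the three majority voters report $b,a,d,c$ and $c,a,b,d$ and $d,a,c,b$, and the two minority voters report $c,d,b,a$ and $b,d,c,a$; then $b$ beats $a$, $c$ beats $b$, $d$ beats $c$, and $b$ beats $d$, so every alternative is majority-beaten and the profile lies outside $\mathcal{D}_C$, yet all three majority voters rank $a$ second. So the ad-paths you describe can leave the domain.

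The paper's proof avoids both problems by showing that, contrary to your premise, Gibbard's theorem \emph{can} be quoted as a black box. Fix a set $I$ of $\lceil\frac{n+1}{2}\rceil$ voters and a profile $R^*$ in which all of $I$ top-rank $a$; then $a$ is the Condorcet winner regardless of what the remaining voters report, so the slice $\{R^*+R\colon R\in\mathcal{R}^{N\setminus I}\}$ is an unconstrained Cartesian copy of the full domain for the sub-electorate $N\setminus I$ sitting inside $\mathcal{D}_C^a$. Letting $\delta_a$ be the probability of $a$ when all voters outside $I$ bottom-rank $a$, one shows $f(R^*+R,a)\geq\delta_a$ always, and the rescaled map $g_{R^*}(R,x)=\frac{1}{1-\delta_a}f(R^*+R,x)$ for $x\neq a$, with $g_{R^*}(R,a)=\frac{1}{1-\delta_a}\bigl(f(R^*+R,a)-\delta_a\bigr)$, is a well-defined, strategyproof, and non-imposing SDS on $\mathcal{R}^{N\setminus I}$; the random dictatorship theorem then pins down its form. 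The remaining steps are ad-path propagation arguments of the kind you gesture at, but in a tractable order: the coefficients are shown independent of the fixed majority profile $R^*$, then independent of the choice of $I$, and finally the representation is extended from profiles where a majority top-ranks $a$ to all of $\mathcal{D}_C^a$ by, for each $x\neq a$ separately, pushing $a$ to the top for a majority that already prefers $a$ to $x$ (which never moves $x$ and only reinforces the Condorcet winner, so membership in the domain is trivially preserved). Without this reduction, your plan amounts to re-proving the random dictatorship theorem from scratch under domain constraints, with no indicated mechanism for the key tops-only and additivity steps.
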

\begin{proofsketch}
Consider an arbitrary super Condorcet domain $\mathcal{D}$, a strategyproof and \emph{ex post} efficient SDS $f$ on $\mathcal{D}$, and fix an alternative $a\in A$. For proving this lemma, we will investigate the behavior of $f$ on several subdomains of $\mathcal{D}_C^a$. In particular, we first fix a set of voters $I\subseteq N$ with $|I|=\lceil \frac{n+1}{2}\rceil$ and a profile $\hat R\in\mathcal{R}^I$ in which all voters $i\in I$ report $a$ as their favorite alternative. Then, we consider the domain $\mathcal{D}_1^{I, \hat R}$ of profiles in which the voters $i\in I$ report $\hat \succ_i$ and the voters $i\in N\setminus I$ report arbitrary preference relations. In particular, we show that $f$ induces an SDS $g_{\hat R}$ on the domain $\mathcal{R}^{N\setminus I}$ that is non-imposing and strategyproof. The  random dictatorship theorem therefore shows that $g_{\hat R}$ is a random dictatorship. By using the relation between $g_{\hat R}$ and $f$, we then derive that there are values $\gamma_C\geq 0$ and $\gamma_i\geq 0$ for $i\in N\setminus I$ such that $f(R)=\gamma_C^\mathit{COND}(R)+\sum_{i\in N\setminus I}\gamma_i d_i(R)$ for all profiles $R\in\mathcal{D}_1^{I, \hat R}$. For proving the lemma from this point on, we repeatedly enlarge the domain $\mathcal{D}_1^{I,\hat R}$ and show that $f$ can always be represented as a mixture of a random dictatorship and the Condorcet rule. For instance, we consider next the domain $\mathcal{D}_2^I$, where the voters in $I$ have to top-rank $a$. Clearly, every profile $R\in\mathcal{D}_2^I$ is in $\mathcal{D}_1^{I,R'}$, where $R'$ is the restriction of $R$ to the voters in $I$. Since $f$ can be represented for every domain $\mathcal{D}_1^{I,\hat R}$ as a mixture of a random dictatorship and the Condorcet rule, we derive an analogous claim for $\mathcal{D}_2^I$ by showing that it is always the same mixture. By further generalizing the domain like this, we eventually derive the lemma. 
\end{proofsketch}

\Cref{lem:DCx} is itself already a rather strong statement as it characterizes the behavior of all strategyproof and \emph{ex post} efficient SDSs $f$ on the domains $\mathcal{D}_C^x$. In particular, this result does neither require that $n$ is odd nor a connectedness condition on the domain. On the other hand, \Cref{lem:DCx} does not relate the behavior of $f$ for different subdomains $\mathcal{D}_C^x$, and it might be that the weight on the Condorcet rule is negative. Indeed, if $n$ is even and $m=3$, it can be checked that the SDS $f(R)=\sum_{i\in N}\frac{1}{n-1}d_i(R)-\frac{1}{n-1}\mathit{COND}(R)$ is well-defined, non-imposing, and strategyproof for the Condorcet domain because the Condorcet winner is top-ranked by at least one voter if $m=3$. 

Nevertheless, \Cref{lem:DCx} is the central tool for proving all of our theorems and we will use it next to characterize the set of strategyproof and non-imposing SDSs on the Condorcet domain and all of its connected supersets for the case that $n$ is odd.

\begin{restatable}{theorem}{cond}\label{thm:cond}
	Assume $n\geq 3$ is odd and let $\mathcal{D}\subseteq \mathcal{R}^N$ denote a connected domain. The following claims are true.
	\begin{enumerate}[label=(\arabic*), leftmargin=*,topsep=4pt]
	    \item Assume $\mathcal{D}_C=\mathcal{D}$. An SDS on $\mathcal{D}$ is strategyproof and non-imposing if and only if it is a mixture of a random dictatorship and the Condorcet rule. 
	    \item Assume $\mathcal{D}_C\subsetneq \mathcal{D}$. An SDS on $\mathcal{D}$ is strategyproof and non-imposing if and only if it is a random dictatorship.
	\end{enumerate}
\end{restatable}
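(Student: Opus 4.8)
The plan is to leverage \Cref{lem:DCx} and \Cref{lem:PO} as the main engine and to reduce both claims to understanding how the weights from \Cref{lem:DCx} interact across the subdomains $\mathcal{D}_C^x$ for different Condorcet winners $x$. Since every strategyproof and non-imposing SDS $f$ on a connected $\mathcal{D}$ with $\mathcal{D}_C\subseteq\mathcal{D}$ is \emph{ex post} efficient by \Cref{lem:PO}, \Cref{lem:DCx} applies to each alternative $a$: for every $a\in A$ there is a random dictatorship $d^a$ and a coefficient $\gamma^a\geq 0$ with $f(R)=(1-\gamma^a)\mathit{COND}(R)+\gamma^a d^a(R)$ for all $R\in\mathcal{D}_C^a$. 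The crux of both parts is to show that these local representations are in fact \emph{globally consistent}, i.e., that the dictatorial weights and the Condorcet weight do not depend on $a$.

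For part (1), where $\mathcal{D}=\mathcal{D}_C$, I would glue the subdomains $\mathcal{D}_C^a$ together. First I would establish that the dictatorial coefficients agree across subdomains by moving between profiles with different Condorcet winners: take a profile $R\in\mathcal{D}_C^a$ in which some voter $i$ top-ranks $a$, and an adjacent profile (via a single admissible swap that keeps the profile in $\mathcal{D}_C$) that lies in $\mathcal{D}_C^b$; applying localizedness and non-perversity at this boundary forces the coefficients of the two representations to coincide on the alternatives left untouched. Iterating this with the connectedness of $\mathcal{D}_C$ (\Cref{lem:connected}) propagates a single common random dictatorship $d$ and a single Condorcet weight $\gamma$ across all of $\mathcal{D}_C$, so that $f=(1-\gamma)\mathit{COND}+\gamma d$ globally. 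It remains to verify $0\leq\gamma\leq 1$ so that $f$ is a genuine lottery-valued mixture: non-negativity of $1-\gamma$ follows because $f(R,x)\geq 0$ must hold at profiles where the Condorcet winner receives no dictatorial weight, and $\gamma\geq 0$ is given by \Cref{lem:DCx}. The converse direction is routine: any mixture of a random dictatorship and $\mathit{COND}$ is strategyproof (mixtures of strategyproof SDSs are strategyproof, and both components are strategyproof on $\mathcal{D}_C$) and non-imposing (the Condorcet rule already witnesses non-imposition, or a positive dictatorial weight does).

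For part (2), where $\mathcal{D}_C\subsetneq\mathcal{D}$ is connected, the goal is to force $\gamma=0$, i.e., to eliminate the Condorcet rule as a component. The idea is that a strictly positive weight on $\mathit{COND}$ cannot survive on profiles outside $\mathcal{D}_C$, where no Condorcet winner exists. Concretely I would take a profile $R'\in\mathcal{D}\setminus\mathcal{D}_C$ and connect it via an ad-path to a nearby profile $R\in\mathcal{D}_C$ with Condorcet winner $a$; by part (1)'s gluing argument $f$ equals $(1-\gamma)\mathit{COND}+\gamma d$ on the Condorcet portion, and propagating the representation along the ad-path into the non-Condorcet region (using localizedness and non-perversity, which hold on all of $\mathcal{D}$) would force the Condorcet component to vanish, since the identity of the ``Condorcet winner'' is not stable under the single swap that crosses the boundary. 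This yields $f=d$ globally, a random dictatorship. The converse is immediate since random dictatorships are strategyproof and non-imposing on every domain by \citet{Gibb77a}.

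The main obstacle I anticipate is the gluing step in part (1): making precise how the local coefficients $\gamma^a$ and the per-voter dictatorial weights of $d^a$ are shown to be \emph{independent of} $a$. This requires carefully chosen boundary profiles that lie in two subdomains $\mathcal{D}_C^a$ and $\mathcal{D}_C^b$ simultaneously (or are separated by a single admissible swap), and a delicate bookkeeping argument combining localizedness, non-perversity, and the connectedness-supplied ad-paths to transport equalities of probabilities without disturbing the alternatives whose coefficients are being compared. The subtlety flagged in the text after \Cref{lem:DCx}—that the Condorcet weight could a priori be negative, and that the parity of $n$ matters—suggests that oddness of $n$ must be invoked precisely here, likely to guarantee that the relevant boundary profiles between distinct Condorcet winners are reachable by single swaps, which is exactly what fails for even $n$.
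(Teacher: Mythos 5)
Your overall architecture is exactly the paper's: \Cref{lem:PO} gives \emph{ex post} efficiency, \Cref{lem:DCx} gives the representation on each subdomain $\mathcal{D}_C^x$, the coefficients are glued across subdomains via connectedness and localizedness (the paper does this by pinning the probability of a spectator alternative $c$ along a $c$-preserving ad-path between a profile with Condorcet winner $a$ and one with Condorcet winner $b$; your boundary-swap iteration is the same idea, and non-perversity is not actually needed there). However, two of your concrete steps have genuine gaps.

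First, your argument for nonnegativity of the Condorcet weight fails when $m=3$. You propose to evaluate $f$ at a profile ``where the Condorcet winner receives no dictatorial weight,'' but for $m=3$ no profile in $\mathcal{D}_C$ has a Condorcet winner that is top-ranked by nobody: if no voter tops $a$, every voter preferring $a$ to $b$ must top $c$ and every voter preferring $a$ to $c$ must top $b$, so $a$ being a Condorcet winner would force both groups to exceed $n/2$. Since you cannot control which voters carry positive dictatorial weight $\gamma_i$, there need not exist any profile at which $f(R,\cdot)$ isolates the Condorcet coefficient---and this is precisely the configuration behind the paper's $m=3$, even-$n$ example with \emph{negative} Condorcet weight, which shows your inequality alone cannot do the job. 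The paper instead derives $\gamma_C\geq 0$ from \emph{non-perversity}, using two profiles ($R^3,R^4$ in the proof of \Cref{thm:cond}) in which a \emph{single} voter swaps $a$ and $b$ and thereby flips the Condorcet winner; this single-swap flip is available exactly because $n$ is odd and the two blocks have size $\frac{n-1}{2}$. So oddness is invoked here, not only in the reachability of boundary profiles as you conjecture.

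Second, in Claim (2) your mechanism---``propagating the representation along the ad-path into the non-Condorcet region, since the identity of the Condorcet winner is not stable under the boundary swap''---is not an argument: outside $\mathcal{D}_C$ the expression $(1-\gamma)\mathit{COND}+\gamma d$ does not even parse, so there is nothing to propagate. The paper's actual device is per-alternative: for a profile $R\in\mathcal{D}\setminus\mathcal{D}_C$ and \emph{each} alternative $x$, oddness of $n$ and the absence of a Condorcet winner yield some $y$ and a strict majority $I$ with $y\succ_i x$ for all $i\in I$; pushing $y$ to the top of the voters in $I$ produces $R'\in\mathcal{D}_C$ with $U(\succ_i,x)=U(\succ_i',x)$ for all $i$, so localizedness pins $f(R,x)=\sum_{i\in N}\gamma_i d_i(R,x)$. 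A single ad-path to one nearby Condorcet profile, as you suggest, can preserve only one alternative's position and hence pins only one coordinate; you need a separate construction for every $x$. The decisive step you are missing is then the mass accounting: summing over all $x$ gives $1=\sum_{x\in A}f(R,x)=\sum_{i\in N}\gamma_i$, which forces $\gamma_C=0$---the Condorcet weight vanishes not because the winner's identity is unstable, but because on profiles without a Condorcet winner that probability mass has nowhere to go.
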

\begin{proof} 
Assume $n\geq 3$ is odd and let $\mathcal{D}$ denote a connected domain with $\mathcal{D}_C\subseteq \mathcal{D}$.\smallskip

\textbf{Proof of Claim (1)}: First, we assume that $\mathcal{D}=\mathcal{D}_C$ and consider an SDS $f$ on $\mathcal{D}$ that is a mixture of a random dictatorship and the Condorcet rule. Since mixtures of strategyproof SDSs are themselves strategyproof and the Condorcet rule as well as all random dictatorships are known to satisfy this axiom on $\mathcal{D}_C$, it follows immediately that $f$ is strategyproof. Moreover, all random dictatorships and the Condorcet rule choose an alternative with probability $1$ if it is unanimously top-ranked. Since all these profile are in $\mathcal{D}_C$, we derive that $f$ is also non-imposing.
	
	For the other direction, assume that $f$ is a strategyproof and non-imposing SDS on $\mathcal{D}_C$. Since the Condorcet domain is connected if $n$ is odd (see \Cref{lem:connected}), we derive from \Cref{lem:PO} that $f$ is \emph{ex post} efficient. In turn, \Cref{lem:DCx} shows that for every alternative $x\in A$, there are values $\gamma_C^x$ and $\gamma_i^x\geq 0$ for all $i\in N$ such that $f(R)=\gamma_C^x \mathit{COND}(R)+\sum_{i\in N} \gamma_i^x d_i(R)$ for all $R\in\mathcal{D}_C^x$. Hence, the theorem follows by showing that $\gamma_C^x=\gamma_C^y$ and $\gamma_i^x=\gamma_i^y$ for all $x,y\in A$ and all $i\in N$, and that $\gamma_C^x\geq 0$. First, we show that $\gamma_i^x=\gamma_i^y$ and $\gamma_C^x=\gamma_C^y$ for all $x,y\in A$. For doing so, consider three alternatives $a,b,c\in A$, a voter $i\in N$, and the profiles $R^1$ and $R^2$.
	
	\begin{profile}{L{0.1\columnwidth} L{0.3\columnwidth} L{0.3\columnwidth}}
		$R^1$: & $i$: $c,a,b,\dots$ & $N\setminus \{i\}$: $a,b,c,\dots$\\
		$R^2$: & $i$: $c,a,b,\dots$ & $N\setminus \{i\}$: $b,a,c,\dots$
	\end{profile}
	
	Clearly, $R^1\in \mathcal{D}_C^a$ and $R^2\in\mathcal{D}_C^b$ and thus, $f(R^1,c)=\gamma_i^a$ and $f(R^2,c)=\gamma_i^b$. Furthermore, since $\mathcal{D}_C$ is connected and $U(\succ^1_i,c)=U(\succ^2_i,c)$ for all $i\in N$, there is an ad-path from $R^1$ to $R^2$ along which $c$ is never swapped. Localizedness implies therefore that $f(R^1,c)=f(R^2,c)$ and hence, $\gamma_i^a=\gamma_i^b$. Because voter $i$ is chosen arbitrarily, this holds for all voters $i\in N$ and we infer that $\gamma_C^a=1-\sum_{i\in N}\gamma_i^a=1-\sum_{i\in N}\gamma_i^b=\gamma_C^b$. This means that $\gamma_C^x=\gamma_C^y$ and $\gamma_i^x=\gamma_i^y$ for all voters $i\in N$ and alternatives $x,y\in A$

    Next, we will show that $\gamma_C^a\geq 0$ for some $a\in A$. For this step, we partition the set of voters in three disjoint subsets  $I_1$, $I_2$, and $I_3$ such that $|I^1|=|I^2|=\frac{n-1}{2}$ and $|I^3|=1$. Now, let $b,c\in A\setminus \{a\}$ denote distinct alternatives and consider the profiles $R^3$ and $R^4$. 
    
    \begin{profile}{L{0.1\columnwidth} L{0.2\columnwidth} L{0.2\columnwidth} L{0.25\columnwidth}}
		$R^3$: & $I^1$: $a,b,\dots$ & $I^2$: $b,a,\dots$ & $I^3$: $c,a,b,\dots$\\
		$R^4$: & $I^1$: $a,b,\dots$ & $I^2$: $b,a,\dots$ & $I^3$: $c,b,a,\dots$
	\end{profile}

    Alternative $a$ is the Condorcet winner in $R^3$ and alternative $b$ in $R^4$. Hence, $f(R^3,a)=\sum_{i\in I^1} \gamma_i^a+\gamma_C^a$ and $f(R^4,a)=\sum_{i\in I^1} \gamma_i^b$. Next, non-perversity shows that $f(R^3,a)\geq f(R^4,a)$. Since $\gamma_i^a=\gamma_i^b$ for all $i\in N$, we thus infer that $\gamma_C^a\geq 0$. Now, by defining $\gamma_C=\gamma_C^a$ and $\gamma_i=\gamma_i^a$ for all $i\in N$ and some $a\in A$, we conclude that $f(R)=\gamma_C\mathit{COND}(R)+\sum_{i\in N} \gamma_i d_i(R)$ for all $R\in\mathcal{D}_C$.\smallskip
    
    \textbf{Proof of Claim (2):} For the second claim, we assume that $\mathcal{D}_C\subsetneq \mathcal{D}$. Since it is straightforward to see that random dictatorships are strategyproof and non-imposing on $\mathcal{D}$, we focus on the converse. For this, let $f$ denote a strategyproof and non-imposing SDS on $\mathcal{D}$. By \Cref{lem:PO}, $f$ is also \emph{ex post} efficient. As a consequence, it is non-imposing in the Condorcet domain, and we thus infer from Claim (1) that there are $\gamma_C\geq 0$ and $\gamma_i\geq 0$ for all $i\in N$ such that $f(R)=\gamma_C\mathit{COND}(R)+\sum_{i\in N} \gamma_i d_i(R)$ for all $R\in\mathcal{D}_C$. Now, for proving Claim (2), consider a profile  $R\in\mathcal{D}\setminus\mathcal{D}_C$ and let $x$ denote an arbitrary alternative. Since $n$ is odd and there is no Condorcet winner in $R$, there is a set of voters $I$ and an alternative $y\in A\setminus \{x\}$ such that $|I|>\frac{n}{2}$ and $y\succ_i x$ for all $i\in I$. Next, we consider the profile $R'$ derived from $R$ by letting all voters $i\in I$ make $y$ into their favorite alternative. Clearly, $y$ is the Condorcet winner in $R'$ and thus $f(R',x)=\sum_{i\in N} \gamma_i d_i(R',x)$. On the other hand, $U(\succ_i, x)=U(\succ_i',x)$ for all $i\in N$ because the voters in $I$ prefer $y$ to $x$ in $R$. Hence, we can apply connectedness and localizedness to derive that $f(R,x)=f(R',x)=\sum_{i\in N}\gamma_i d_i(R,x)$. Since $x$ is chosen arbitrarily, this means that $f(R)=\sum_{i\in N}\gamma_i d_i(R)$. In particular, it must hold that $\sum_{i\in N}\gamma_i=1$ and thus $\gamma_C=0$ as otherwise $\sum_{x\in A} f(R,x)<1$. This proves $f(R)=\sum_{i\in N} \gamma_i d_i(R)$ for all $R\in\mathcal{D}$.
\end{proof}

Claim (1) of \Cref{thm:cond} immediately implies that the Condorcet rule is the only ``completely non-randomly dictatorial'' SDS on the Condorcet domain that satisfies strategyproofness and non-imposition. To formalize this observation, we introduce the notion of $\gamma$-randomly dictatorial SDSs first suggested by \citet{BLR21b}: a strategyproof SDS $f$ on a domain $\mathcal{D}$ is $\gamma$-randomly dictatorial if $\gamma\in [0,1]$ is the maximal value such that $f$ can be represented as $f(R)=\gamma d(R)+(1-\gamma) g(R)$ for all profiles $R\in\mathcal{D}$, where $d$ is a random dictatorship and $g$ is another strategyproof SDS on $\mathcal{D}$. It follows immediately from \Cref{thm:cond} that, if $n$ is odd, the Condorcet rule is the only $0$-randomly dictatorial, strategyproof, and non-imposing SDS on the Condorcet domain. This corollary generalizes Theorem 1 of \citet{CaKe03a} who have characterized the Condorcet rule with equivalent axioms in the deterministic setting. Furthermore, this insight highlights the appeal of the Condorcet rule on the Condorcet domain because every other strategyproof and non-imposing SDS is a mixture of the Condorcet rule and a random dictatorship. 

On the other hand, Claim (2) of \Cref{thm:cond} generalizes the random dictatorship theorem from the full domain to all connected supersets of $\mathcal{D}_C$ if $n$ is odd. Since deterministic voting rules can be seen as a special case of SDSs, our result also generalizes the Gibbard-Satterhwaite theorem to these smaller domains. 
In particular, Claim (2) of \Cref{thm:cond} shows that adding even a single profile to the Condorcet domain can turn the positive results of Claim (1) into a negative one. This follows, for instance, by considering the subsequent domain $\mathcal{D}_1=\mathcal{D}_C\cup \{R^*\}$. The preference profile $R^*$ is shown below, where $I=\{4,6,\dots, n-1\}$, $J=\{5,7,\dots, n\}$.\smallskip

\begin{profile}{L{0.1\columnwidth} L{0.22\columnwidth} L{0.22\columnwidth} L{0.22\columnwidth}}
$R^*$: & $1$: $a,b,c,\dots$ & $2$: $b,c,a,\dots$ & $3$: $c,a,b,\dots$\\
& $I$: $a,b,c,\dots$ & $J$: $c,b,a,\dots$
\end{profile}

\subsection{Results for the tie-breaking Condorcet domain}\label{subsec:condeven}

A natural follow-up question to \Cref{thm:cond} is to ask for the strategyproof and non-imposing SDSs on the Condorcet domain if $n$ is even. Unfortunately, since the Condorcet domain is not connected in this case, a concise characterization of all these SDSs seems impossible. We therefore characterize the set of strategyproof and non-imposing SDS on the tie-breaking Condorcet domain $\mathcal{D}_C^\ssucc$. Moreover, the following theorem also demonstrates that, if $n$ is even, tie-breaking Condorcet domains are inclusion-maximal connected domains that allow for strategyproof and non-imposing SDSs other than random dictatorships.

\begin{restatable}{theorem}{condeven}\label{thm:condeven}
Assume $n\geq 4$ is even, let ${\ssucc}\in\mathcal{R}$ be a preference relation, and $\mathcal{D}\subseteq \mathcal{R}^N$ be a connected domain. The following claims hold.
\begin{enumerate}[label=(\arabic*), leftmargin=*,topsep=4pt]
        \item Assume $\mathcal{D}=\mathcal{D}_C^\ssucc$. An SDS on $\mathcal{D}$ is strategyproof and non-impo\-sing if and only if it is a mixture of a random dictatorship and the tie-breaking Condorcet rule $\mathit{COND}^\ssucc$. 
        \item Assume $\mathcal{D}_C^\ssucc\subsetneq \mathcal{D}$. An SDS on $\mathcal{D}$ is strategyproof and non-impo\-sing if and only if it is a random dictatorship.
    \end{enumerate}
\end{restatable}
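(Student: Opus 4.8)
The plan is to mirror the proof of \Cref{thm:cond}, treating the fixed tie-breaking relation $\ssucc$ as a ``phantom voter'' so that $(R,\ssucc)$ effectively has $n+1$ voters, which is odd. The tie-broken Condorcet winner then plays exactly the role the Condorcet winner played for odd $n$, and $\mathit{COND}^\ssucc$ plays the role of $\mathit{COND}$. Throughout I would work on $\mathcal{D}_C^\ssucc$ rather than $\mathcal{D}_C$, since the former is connected by \Cref{lem:connected} while the latter is not for even $n$. For the easy direction of Claim~(1), $\mathit{COND}^\ssucc$ and every random dictatorship are strategyproof on $\mathcal{D}_C^\ssucc$ (a single real voter cannot move the tie-broken Condorcet winner to an alternative it prefers), so every mixture is strategyproof; non-imposition is witnessed by the unanimous profiles.

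For the hard direction, I would first re-establish \emph{ex post} efficiency: the proof of \Cref{lem:PO} carries over once the phrase ``since $n$ is odd there is a majority preferring $z$ to $x$'' is replaced by ``since $(R^3,\ssucc)$ has oddly many voters and $x$ is not its Condorcet winner, some $z$ satisfies $g_{(R^3,\ssucc)}(z,x)>0$,'' with the reinforcement and swapping steps routed so that $z$ becomes the tie-broken Condorcet winner. Partitioning $\mathcal{D}_C^\ssucc$ into the subdomains $\mathcal{D}_C^{\ssucc,x}$ on which $x$ is the tie-broken Condorcet winner, I would then prove the tie-breaking analog of \Cref{lem:DCx}, giving $f(R)=\gamma_C^x\,\mathit{COND}^\ssucc(R)+\sum_{i\in N}\gamma_i^x d_i(R)$ on each $\mathcal{D}_C^{\ssucc,x}$. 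The pieces are glued exactly as in \Cref{thm:cond}: connectedness and localizedness along ad-paths fixing a designated alternative show that the coefficients agree across all subdomains, and non-perversity on a one-swap pair of profiles shows $\gamma_C\ge 0$.

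Claim~(2) follows the template of Claim~(2) of \Cref{thm:cond}: \emph{ex post} efficiency together with Claim~(1) restricts $f$ to the form $\gamma_C\,\mathit{COND}^\ssucc+\sum_i\gamma_i d_i$ on $\mathcal{D}_C^\ssucc$; then for $R\in\mathcal{D}\setminus\mathcal{D}_C^\ssucc$ and each alternative $x$, I would pick $z$ with $g_{(R,\ssucc)}(z,x)>0$ (which exists since $(R,\ssucc)$ has no Condorcet winner and $n+1$ is odd), construct $R'\in\mathcal{D}_C^{\ssucc,z}$ leaving every upper contour set of $x$ unchanged, and transfer $f(\cdot,x)$ from $R'$ to $R$ by connectedness and localizedness. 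Requiring the probabilities to sum to one then forces $\gamma_C=0$, so that $f$ is a random dictatorship.

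The main obstacle is the loss of the odd-$n$ fact that a single voter can flip the Condorcet winner: for even $n$ every majority margin among the real voters is even, so each construction above must be routed through the tie-broken Condorcet winner. The delicate point is to ensure that the intermediate profiles---particularly those obtained by top-ranking $z$ for a voter set of size only $\lceil n/2\rceil$---genuinely lie in $\mathcal{D}_C^\ssucc$, which is exactly where the phantom voter $\ssucc$ is needed to tip the relevant margins. Verifying these memberships and re-deriving the tie-breaking analogs of \Cref{lem:PO} and \Cref{lem:DCx} with the phantom voter in place is the technical core of the argument and is why the full proof is deferred to the appendix.
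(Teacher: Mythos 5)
Your high-level architecture coincides with the paper's (an \emph{ex post} efficiency analog of \Cref{lem:PO}, a per-winner mixture representation, gluing via connectedness and localizedness, $\gamma_C\geq 0$ via non-perversity, and the transfer argument for Claim~(2)), but the step you defer as ``the technical core'' is exactly where your proposed route breaks down, so this is a genuine gap rather than a complete proof by a different method. Concretely, a ``tie-breaking analog of \Cref{lem:DCx}'' obtained by transcribing its proof with the phantom voter $\ssucc$ does not go through. In Step~2 of that proof one fixes a set $I$ of $t=\lceil\frac{n+1}{2}\rceil$ voters top-ranking $a$ precisely so that \emph{every} completion by the voters in $N\setminus I$ keeps $a$ a strict Condorcet winner; this makes the induced SDS $g_{R^*}$ well-defined on the full domain $\mathcal{R}^{N\setminus I}$, where the random dictatorship theorem applies. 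For even $n$, the profiles you must additionally cover are those where only $\frac{n}{2}$ voters top-rank $a$ and $\ssucc$ tips the tie. But if you fix only $|I|=\frac{n}{2}$ voters, arbitrary reports of $N\setminus I$ leave the domain $\mathcal{D}_C^{\ssucc,a}$ (e.g., all of $N\setminus I$ top-rank some $b$ with $b\ssucc a$), so the induced SDS is not defined on a full domain and the random dictatorship theorem cannot be invoked; with $|I|=\frac{n}{2}+1$ you merely recover \Cref{lem:DCx} as stated, which covers only the strict subdomains $\mathcal{D}_C^x$. This is why the paper applies \Cref{lem:DCx} verbatim (it needs no parity assumption) and then spends three dedicated steps (Steps~2--4 of its proof) extending the representation to profiles whose Condorcet winner exists only after tie-breaking, via a chain of auxiliary profiles ($R^4$--$R^7$) combined with \emph{ex post} efficiency, localizedness, and connectedness. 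That extension is the genuinely new content of the even case and is absent from your sketch.

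A second missing idea concerns the domain memberships you flag but do not resolve: both in the \emph{ex post} efficiency argument and in Claim~(2), when the coalition $I$ preferring $z$ to $x$ has size exactly $\frac{n}{2}$, it is \emph{not} enough to pick an arbitrary $z$ with $g_{(R,\ssucc)}(z,x)>0$ and top-rank it for $I$ --- the resulting profile may fail to lie in $\mathcal{D}_C^\ssucc$ because some $z'$ with $z'\ssucc z$ can tie-beat $z$. The paper's fix is twofold: choose $z$ to be $\ssucc$-maximal among the alternatives beating $x$ in $(R,\ssucc)$, and have the voters outside $I$ place $z$ directly below $x$; only then does transitivity of $\ssucc$ rule out any $z'$ with $g_{(R',\ssucc)}(z',z)>0$, since such a $z'$ would also satisfy $g_{(R,\ssucc)}(z',x)>0$ and $z'\ssucc z$, contradicting maximality. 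Without this selection-and-placement trick the membership claims on which your ad-paths rely can fail, so the two ideas that actually make the even case work remain to be supplied.
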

\begin{proofsketch}
Assume $n\geq 4$ is even, fix a preference relation ${\ssucc}\in\mathcal{R}$, and consider a connected domain $\mathcal{D}$ with $\mathcal{D}_C^\ssucc\subseteq \mathcal{D}$. First, note that random dictatorships are strategyproof and non-imposing on $\mathcal{D}$, regardless of whether $\mathcal{D}=\mathcal{D}_C^\ssucc$ or $\mathcal{D}_C^\ssucc\subsetneq\mathcal{D}$. Moreover, if $\mathcal{D}=\mathcal{D}_\ssucc$, then $\mathit{COND}^\ssucc$ is strategyproof on $\mathcal{D}$ because every manipulation of this rule can be turned in a manipulation of the Condorcet rule for $n+1$ voters. Since mixture of strategyproof SDSs are strategyproof, it follows that all mixtures of random dictatorships and $\mathit{COND}^\ssucc$ are strategyproof, and it is easy to see that these rules are also non-imposing.

Next, we focus on the direction from left to right and consider for this a strategyproof and non-imposing SDS $f$ on $\mathcal{D}$. Analogous to \Cref{lem:PO}, it is not difficult to derive that $f$ is \emph{ex post} efficient on $\mathcal{D}$. Hence, \Cref{lem:DCx} implies that there are values $\gamma_i^x\geq 0$ for all $i\in N$ and $\gamma_C^x$ such that $f(R)=\gamma_C^x\mathit{COND}(R)+\sum_{i\in N}\gamma_i^xd_i(R)$ for all subdomains $\mathcal{D}_C^x$ and profiles $R\in\mathcal{D}_C^x$. Next, we show analogously to the proof of \Cref{thm:cond} that $\gamma_C^x=\gamma_C^y$ and $\gamma_i^x=\gamma_i^y$ for all $x,y\in A$ and $i\in N$ and we can thus drop the superscript. Since $\mathit{COND}(R)=\mathit{COND}^\ssucc(R)$ for all $R\in\mathcal{D_C}$ if $n$ is even, this means that $f(R)=\gamma_C\mathit{COND}^\ssucc(R)+\sum_{i\in N} \gamma_i d_i(R)$ for all $R\in\mathcal{D}_C$. 

Now, to prove the first claim, let us assume that $\mathcal{D}=\mathcal{D}_C^\ssucc$. In this case, we first show that $f(R)$ can also be represented as $\gamma_C\mathit{COND}^\ssucc(R)+\sum_{i\in N} \gamma_i d_i(R)$ if there is an alternative $x$ in $R$ that is top-ranked by at least half of the voters and the Condorcet winner in $(R,\ssucc)$. Next, we consider a profile $R\in\mathcal{D}_C^\ssucc$ and let $x$ denote the Condorcet winner in $(R,\ssucc)$. This means that for every alternative $y\in A\setminus \{x\}$, there are at least $\frac{n}{2}$ voters who prefer $x$ to $y$ in $R$. If we let these voters reinforce $x$ until it is top-ranked, we arrive at a profile $R'$ such that $f(R')=\gamma_C\mathit{COND}^\ssucc(R')+\sum_{i\in N} \gamma_i d_i(R')$. Moreover, connectedness and localizedness imply that the probability of $y$ does not change when going from $R$ to $R'$. Since $y\in A\setminus \{x\}$ is chosen arbitrarily, we derive from this observation that $f(R)=\gamma_C\mathit{COND}^\ssucc(R)+\sum_{i\in N} \gamma_i d_i(R)$ for every profile $R\in\mathcal{D}_C^\ssucc$. As last step, we show that $\gamma_C\geq 0$ by using a similar argument as in the proof of \Cref{thm:cond}. This completes the proof of Claim (1).

For proving Claim (2), assume that $\mathcal{D}_C^\ssucc\subsetneq \mathcal{D}$. Using the same reasoning as for Claim (1), we infer that $f$ can be represented as a mixture of a random dictatorship and $\mathit{COND}^\ssucc$ for all profiles $R\in\mathcal{D}_C^\ssucc$. Now, consider a profile $R\in\mathcal{D}\setminus \mathcal{D}_C^\ssucc$. For the proof, we identify for every alternative $x\in A$ a profile $R'\in\mathcal{D}_C^\ssucc$ such that $U(\succ_i,x)=U(\succ_i',x)$ for all voters $i\in N$. Once we have these profiles, the proof proceeds exactly as the proof of Claim (2) in \Cref{thm:cond}.
\end{proofsketch}

First, note that \Cref{thm:condeven} implies---analogously to \Cref{thm:cond}---that the tie-breaking Condorcet rule is the only strategyproof, non-imposing, and $0$-randomly dictatorial SDS on the tie-breaking Condorcet domain if $n$ is even. In particular, this proves again that choosing the Condorcet winners is desirable because $\mathit{COND}^\ssucc$ chooses the Condorcet winners whenever there is one. Moreover, since the tie-breaking Condorcet domain is only a small extension of the Condorcet domain, this result demonstrates the important role of Condorcet winners for the existence of strategyproof and non-imposing SDSs other than random dictatorships. 

Furthermore, Claim (2) in \Cref{thm:condeven} shows again that adding even a single profile to $\mathcal{D}_C^\ssucc$ can turn the positive result into a negative one. In particular, note that this claim also implies that the domain of all profiles with a weak Condorcet winner only allows for random dictatorships as strategyproof and non-imposing SDSs.

\begin{remark}	
An important observation of \Cref{thm:cond,thm:condeven} is that every strategyproof and non-imposing SDS on the respective domains can be represented as a mixture of deterministic voting rules, each of which is strategyproof and non-imposing. This is sometimes called deterministic extreme point property and it is remarkable that many important domains satisfy this condition \citep[][]{RoSa20a}. On one side, this shows that randomization does not lead to completely new strategyproof SDSs. On the other hand, the deterministic extreme point property allows for a natural interpretation of strategyproof and non-imposing SDSs: we decide by chance which deterministic voting rule is executed.
\end{remark}
\vspace{-0.3cm}
	\noindent\begin{table*}
	\setlength{\tabcolsep}{4pt}
		\begin{tabular}{L{0.22\textwidth} L{0.2\textwidth} L{0.22\textwidth} L{0.29\textwidth}}
			\toprule
			& Full domain $\mathcal{R}^N$ & Domains $\mathcal{D}$ with $\mathcal{D}_C^{(\ssucc)}\subsetneq \mathcal{D}$ & (tie-breaking) Condorcet domain $\mathcal{D}_C^{(\ssucc)}$\\\midrule 
			Deterministic, strategyproof, and non-imposing voting rules & Dictatorships \citep{Gibb73a,Satt75a} & \emph{Dictatorships$^\diamond$ (\Cref{thm:cond,thm:condeven})} & Dictatorships and the (tie-breaking) Condorcet rule (\Cref{thm:condeven} and \citep{CaKe03a})\\\addlinespace
			Strategyproof and non-imposing SDSs & Random dictatorships \citep{Gibb77a} & \emph{Random dictatorships$^\diamond$ (\Cref{thm:cond,thm:condeven})} & \emph{Mixtures of random dictatorships and the (tie-breaking) Condorcet rule (\Cref{thm:cond,thm:condeven})}\\\addlinespace
			Group-strategyproof and non-imposing SDSs & Dictatorial SDSs \citep{Barb79a} & \emph{Dictatorial SDSs (\Cref{thm:groupCond})} & \emph{Dictatorial SDSs and the (tie-breaking) Condorcet rule (\Cref{thm:groupCond})}\\
			\bottomrule
		\end{tabular}
		\caption{Comparison of results for the full domain $\mathcal{R}^N$, strict supersets of $\mathcal{D}_C$ (resp. $\mathcal{D}_C^\ssucc$), and the (tie-breaking) Condorcet domain $\mathcal{D}_C$ (resp. $\mathcal{D}_C^\ssucc$). Each row characterizes a set of SDSs for the full domain $\mathcal{R}^N$, strict supersets of $\mathcal{D}_C$ (resp. $\mathcal{D}_C^\ssucc$), and the (tie-breaking) Condorcet domain $\mathcal{D}_C$ (resp. $\mathcal{D}_C^\ssucc$), respectively. For the last two columns, the results rely on a case distinction with respect to $n$: if $n$ is odd, we consider the results of \Cref{thm:cond} for the Condorcet domain and its supersets; if $n$ is even, we consider the results of \Cref{thm:condeven} for the tie-breaking Condorcet domain and its supersets. The results marked with a diamond ($\diamond$) require that the considered domain is connected. New results are italicized.}
		\label{tab:FDvsCD}
	\end{table*}
	
\begin{remark} The connectedness condition is required for Claim (2) in \Cref{thm:cond,thm:condeven} because there are domains $\mathcal{D}$ with $\mathcal{D}_C\subsetneq \mathcal{D}$ (resp. $\mathcal{D}_C^\ssucc\subsetneq \mathcal{D}$) that allow for non-imposing and strategyproof SDSs that are no random dictatorships. For example, consider the domain $\mathcal{D}_2$ which is derived by adding a single preference profile $R^1$ to the Condorcet domain. If $R^1$ differs from every profile in $\mathcal{D}_C$ in the preference relations of at least two voters, an arbitrary outcome can be returned for $R^1$ without violating strategyproofness.
\end{remark}

\subsection{Results based on Group-strategyproofness}\label{subsec:GSP}
	
Finally, we investigate the set of of group-strategyproof and non-imposing SDSs on the Condorcet domain and its supersets. In particular, we will show that only the Condorcet rule and dictatorial SDSs satisfy group-strategyproofness on the Condorcet domain. Note that this result is independent of the parity of $n$ and group-strategyproofness thus allows for a unified characterization. Moreover, we also prove a counterpart to Claim (2) in \Cref{thm:cond,thm:condeven}, which notably does not require connectedness. 

\begin{restatable}{theorem}{groupCond}\label{thm:groupCond}
Assume $n\geq 3$ and let $\mathcal{D}\subseteq \mathcal{R}^N$ denote an arbitrary domain. The following claims are true.
\begin{enumerate}[label=(\arabic*), leftmargin=*,topsep=4pt]
    \item Assume $\mathcal{D}=\mathcal{D}_C$. An SDS on $\mathcal{D}$ is group-strategyproof and non-imposing if and only if it is a dictatorship or the Condorcet rule.
    \item Assume $\mathcal{D}_C\subsetneq \mathcal{D}$ and that there is a profile $R\in\mathcal{D}$ such that for each $x\in A$, there is $y\in A$ with $g_R(y,x)>0$. An SDS on $\mathcal{D}$ is group-stra\-te\-gy\-proof and non-imposing if and only if it is a dictatorship. 
\end{enumerate}
\end{restatable}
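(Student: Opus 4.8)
The plan is to exploit that group-strategyproofness is strong enough to (a) force \emph{ex post} efficiency directly, without the connectedness used in \Cref{lem:PO}, and hence make \Cref{lem:DCx} applicable on every $\mathcal{D}_C^x$ regardless of the parity of $n$, and (b) rule out every non-trivial mixture by an explicit coalitional ``compromise'' deviation. Since group-strategyproofness implies strategyproofness, every SDS considered is localized and non-perverse. The converse directions are immediate: a dictatorship $d_i$ is group-strategyproof because voter $i$ obtains its favorite alternative with probability $1$ and thus never gains from a deviation of any coalition containing it; and $\mathit{COND}$ is group-strategyproof on $\mathcal{D}_C$ because, if a coalition $I$ changes the Condorcet winner from $c$ to $c'$, then $c$ beats $c'$ in $R$ while $c'$ beats $c$ in $R'$ and only the voters in $I$ change, which forces some $i\in I$ with $c\succ_i c'$, so that $i$ weakly prefers the truthful winner. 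Both rules are non-imposing.

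First I would prove a unanimity property: if $x$ is top-ranked by all voters in $R$, then, taking $S$ with $f(S,x)=1$ from non-imposition and applying group-strategyproofness to the deviation $R\to S$, the witnessing voter $i$ satisfies $f(R)\succsim_i^\sd f(S)=\delta_x$, and since $U(\succ_i,x)=\{x\}$ this gives $f(R,x)=1$. The same inequality, applied to a profile $R$ in which $x$ Pareto-dominates $y$ and its variant $R'$ in which every voter top-ranks $x$ (so $f(R')=\delta_x$ by unanimity), places all mass of $f(R)$ on $U(\succ_i,x)$ for some $i$, whence $f(R,y)=0$; thus $f$ is \emph{ex post} efficient for every parity of $n$. \Cref{lem:DCx} then shows that on each $\mathcal{D}_C^x$ the SDS $f$ is a mixture of $\mathit{COND}$ and a random dictatorship.

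For Claim~(1) I would show that any non-degeneracy is group-manipulable. Suppose $f(R)$ is supported on at least two alternatives for some $R\in\mathcal{D}_C$; by \Cref{lem:DCx} these are the Condorcet winner $w$ and the top alternative $t$ of some dictator with positive weight. Letting a majority coalition that contains all active dictators move a compromise $c$ to the top makes $c$ the new Condorcet winner and collapses the outcome to $\delta_c$ by \emph{ex post} efficiency. Arranging the preferences so that the dictator ranks $c$ above $w$ and every other coalition member ranks $c$ above $t$, no member weakly prefers the split lottery $f(R)$ to $\delta_c$, contradicting group-strategyproofness. Hence $f$ is deterministic and, on each $\mathcal{D}_C^x$, equals either $\mathit{COND}$ or a single dictatorship $d_{k(x)}$. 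It remains to match these labels across subdomains: for odd $n$ this follows from connectedness and localizedness as in \Cref{thm:cond}; for even $n$, where adjacent profiles in $\mathcal{D}_C$ never change the Condorcet winner, I would instead link the regions through two-voter coalition deviations, applying the group-strategyproofness inequalities in both directions to force a common label, so that $f=\mathit{COND}$ or $f=d_k$ globally.

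For Claim~(2), \emph{ex post} efficiency makes $f$ non-imposing on $\mathcal{D}_C$, so Claim~(1) applies to its restriction and $f|_{\mathcal{D}_C}$ is $\mathit{COND}$ or a dictatorship $d_k$. The hypothesised profile $R$ excludes $\mathit{COND}$: for each $x$ choose $y$ with $g_R(y,x)>0$, let $I=\{i:y\succ_i x\}$ move $y$ to the top to reach $R_y\in\mathcal{D}_C^y$ with $f(R_y)=\delta_y$, and apply group-strategyproofness to $R\to R_y$ to obtain $i\in I$ with $f(R)\succsim_i^\sd\delta_y$; evaluating stochastic dominance at $y$ puts all mass of $f(R)$ on $U(\succ_i,y)$, and $y\succ_i x$ yields $f(R,x)=0$. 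As this holds for every $x\in A$, it contradicts $\sum_{x\in A}f(R,x)=1$, so $f|_{\mathcal{D}_C}=d_k$. I would then extend the dictatorship to all of $\mathcal{D}$ by relating an arbitrary $S\in\mathcal{D}$, with $z$ the top alternative of voter $k$ in $S$, to profiles of $\mathcal{D}_C$ in which voter $k$ keeps $z$ on top, using the group-strategyproofness inequalities to pin $f(S)=\delta_z$. The main obstacle is exactly this cross-subdomain matching for even $n$ in Claim~(1) and the analogous extension in Claim~(2): since $\mathcal{D}_C$ is disconnected for even $n$, localizedness no longer transports information between the regions $\mathcal{D}_C^x$ and one has to extract it solely from coalitional group-strategyproofness; a secondary point is to check that the compromise profiles in the manipulation argument remain in $\mathcal{D}_C$ and that the witnessing voter can be taken among those whose preferences are prescribed.
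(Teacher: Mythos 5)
Your overall architecture coincides with the paper's: derive \emph{ex post} efficiency directly from group-strategyproofness (no connectedness needed, hence no parity restriction), invoke \Cref{lem:DCx} on each subdomain $\mathcal{D}_C^x$, destroy every non-trivial mixture by a coalitional ``compromise'' manipulation, and transfer information between the subdomains $\mathcal{D}_C^x$ by coalitional deviations rather than by localizedness. Your Claim~(2) is the paper's argument almost verbatim (excluding $\mathit{COND}$ via the hypothesized profile, then pinning the dictatorship on $\mathcal{D}\setminus\mathcal{D}_C$ by letting $N\setminus\{k\}$ bottom-rank voter $k$'s favorite alternative in a profile of $\mathcal{D}_C$ and deviate outward). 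The only structural difference is the order of the two main steps in Claim~(1): the paper first proves $\gamma_C^x=\gamma_C^y$ and $\gamma_i^x=\gamma_i^y$ across subdomains---using the coalition $N\setminus\{i\}$ deviating between the profiles $R^1$ ($i$: $c,a,b,\dots$; rest: $a,b,c,\dots$) and $R^2$ ($i$: $c,a,b,\dots$; rest: $b,a,c,\dots$), which forces $f(R^1,\{a,b\})=f(R^2,\{a,b\})$ in both directions---and only then shows $\gamma_i\in\{0,1\}$; you reverse the order, which is legitimate, but note that your even-$n$ ``two-voter coalition'' matching step is only announced, not executed (it does go through: two voters suffice to flip the Condorcet winner when $n$ is even, and mismatched labels on $\mathcal{D}_C^a$ and $\mathcal{D}_C^b$ yield a deterministic group manipulation; alternatively one can simply copy the paper's $N\setminus\{i\}$ deviation).

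There is, however, one step that fails as written: ``letting a majority coalition that contains all active dictators move a compromise $c$ to the top \dots collapses the outcome to $\delta_c$ by \emph{ex post} efficiency.'' If the coalition is a strict majority, the resulting profile $R'$ lies in $\mathcal{D}_C^c$, but nothing in $R'$ is Pareto-dominated---the non-coalition voters may still top-rank $w$ or $t$---so \emph{ex post} efficiency gives you nothing. Worse, at this stage of your argument the weights have not yet been matched across subdomains, so $f(R')=\gamma_C^c\,\mathit{COND}(R')+\sum_{i\in N}\gamma_i^c d_i(R')$ with \emph{unknown} coefficients $\gamma^c$; the phrase ``all active dictators'' is therefore not even well-defined, since the voters active on $\mathcal{D}_C^c$ need not be those active on $\mathcal{D}_C^x$, and a voter outside your coalition could carry positive weight $\gamma_j^c$ on a non-$c$ alternative. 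The repair is to take the \emph{grand} coalition: let every voter push $c$ to the top, so that $f(R')$ assigns probability $1$ to $c$ by the unanimity property you established at the outset, independently of any weights; your arrangement (each voter ranks $c$ above some alternative receiving positive probability under $f(R)$) then shows that no voter weakly prefers $f(R)$ to the new outcome, contradicting group-strategyproofness. This is precisely how the paper executes the corresponding step, where all of $N$ deviates from $R^3$ to the unanimous profile $R^4$. A cosmetic remark: the support of $f(R)$ need not consist of the Condorcet winner and one dictator's top (with $\gamma_C^x=0$ it can be two dictators' tops), but the grand-coalition compromise handles that case identically. With these corrections your proof is complete and matches the paper's.
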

\begin{proofsketch}
For the direction from right to left of both claims, we note first that dictatorships are clearly non-imposing and group-strategyproof on every super Condorcet domain. Furthermore, it is also apparent that the Condorcet rule is non-imposing on the Condorcet domain. We hence only need to show that $\mathit{COND}$ is group-strategyproof on $\mathcal{D}_C$. For this, let $I\subseteq N$ denote a non-empty set of voters and consider two profiles $R, R'\in \mathcal{D}_C$ such that ${\succ}_i={\succ'}_i$ for all $i\in N\setminus I$. Moreover, let $c$ and $c'$ denote the respective Condorcet winners in $R$ and $R'$. If $c=c'$, then $\mathit{COND}(R)=\mathit{COND}(R')$ and the Condorcet rule is clearly group-strategyproof. On the other hand, if $c\neq c'$, there must be a voter $i\in I$ with $c \succ_i c'$ and $c' \succ_i' c$; otherwise, it is impossible that $g_{R}(c,c')>0$ and $g_{R'}(c',c)>0$. However, this voter prefers $\mathit{COND}(R)$ to $\mathit{COND}(R')$, which proves that $\mathit{COND}$ is also in this case group-strateygproof.

For the other direction, we consider a group-strategyproof and non-imposing SDS $f$ on a domain $\mathcal{D}$ with $\mathcal{D}_C\subseteq\mathcal{D}$. First, it is not difficult to see that $f$ must be \emph{ex post} efficient. Since group-strategyproofness implies strategyproofness, we can now invoke \Cref{lem:DCx} to derive that for every alternative $x\in A$, there are values $\gamma_C^x$ and $\gamma_i^x\geq 0$ for all $i\in N$ such that $f(R)=\gamma_C^x\mathit{COND}(R)+\sum_{i\in N} \gamma_i^x d_i(R)$ for all $R\in\mathcal{D}_C^x$. Moreover, we can essentially use the same argument as in the proof of \Cref{thm:cond} to show that $\gamma_C^x=\gamma_C^y$ and $\gamma_i^x=\gamma_i^y$ for all $i\in N$ and $x,y\in A$. We hence drop the superscript from now on and write, e.g., $\gamma_C$ instead of $\gamma_C^x$. 

Next, we show that $\gamma_i=1$ if $\gamma_i>0$. For this, we assume for contradiction there is a voter $i\in N$ with $0<\gamma_i\neq 1$. We consider the profiles $R^1$ and $R^2$ shown below to derive a contradiction.

\begin{profile}{L{0.1\columnwidth} L{0.3\columnwidth} L{0.3\columnwidth}}
		$R^1$: & $i$: $c,a,b,\dots$ & $N\setminus \{i\}$: $b,a,c,\dots$\\
		$R^2$: & $i$: $a,b,c,\dots$ & $N\setminus \{i\}$: $a,b,c,\dots$
	\end{profile}

Since $b$ is the Condorcet winner in $R^1$ and $\gamma_i\neq 1$, we have that $f(R^1,c)=\gamma_i>0$ and $f(R^1,b)=1-f(R^3,c)>0$. On the other hand, \emph{ex post} efficiency shows that $f(R^2,a)=1$. However, the set of all voters can now group-manipulate by deviating from $R^1$ to $R^2$ because $f(R^1,U(\succ^1_j,a))<1=f(R^2,U(\succ^1_j,a))$ for all $j\in N$. This contradicts that $f$ is group-strategyproof and thus proves that $\gamma_i=1$ if $\gamma_i>0$. Now, since there clearly cannot be different voters $i,j$ with $\gamma_i=1$ and $\gamma_j=1$, we infer that for all profiles $R\in\mathcal{D}_C$, either $f(R)=d_i(R)$ for some $i\in N$ or $f(R)=\mathit{COND}(R)$ if $\gamma_i=0$ for all $i\in N$. This proves Claim (1) by choosing $\mathcal{D}=\mathcal{D}_C$.

For proving Claim (2), we assume next that there is a profile $R^*\in\mathcal{D}$ such that for every alternative $x\in A$, there is another alternative $y\in A\setminus\{x\}$ such that $g_{R^*}(y,x)>0$. Now, consider an alternative $a\in A$ with $f(R^*,a)>0$, let $b$ denote an alternative with $g_{R^*}(b,a)>0$, and define $I=\{i\in N\colon b \succ_i^* a\}$. We let all voters $i\in I$ make $b$ into their best alternative to derive the profile $R'$. Note that $R'\in\mathcal{D}_C\subseteq \mathcal{D}$ as $y$ is the Condorcet winner in $R'$. If $f(R')=\mathit{COND}(R')$, the voters $i\in I$ can group-manipulate by deviating from $R$ to $R'$ because they all prefer $b$ to $a$. Hence, group-strategyproofness requires that there is a voter $i\in N$ such that $f(R)=d_i(R)$ for all $R\in\mathcal{D}_C$. 
From here on, it is easy to see that $f=d_i(R)$ for all $R\in\mathcal{D}$, which proves Claim (2).
\end{proofsketch}

\Cref{thm:groupCond} generalizes \Cref{thm:cond} to super Condorcet domains for an even number of voters by using group-strategy\-proofness. In particular, it entails that the Condorcet rule is the only group-strategy\-proof, non-imposing, and non-dictatorial SDS on the Condorcet domain, regardless of the parity of $n$. Moreover, Claim (2) of the theorem shows that the Condorcet domain is essentially a maximal domain that allows for a group-strategyproof and non-imposing SDS apart from dictatorships. In more detail,if $n$ is odd, every domain $\mathcal{D}$ with $\mathcal{D}_C\subsetneq\mathcal{D}$ satisfies the conditions of Claim (2) in \Cref{thm:groupCond}. Hence, \emph{no} superset of the Condorcet domain admits group-strategyproof and non-imposing SDSs other than dictatorships if $n$ is odd. On the other hand, if $n$ is even, \Cref{thm:groupCond} can be refined. For instance, $\mathit{COND}^\ssucc$ is also group-strategyproof on $\mathcal{D}_C^\ssucc$. Indeed, it is possible to prove an exact equivalent of \Cref{thm:condeven} for disconnected domains based on group-strategyproofness. 

\begin{remark} The results of \citet{Barb79a} imply that every group-strategyproof and non-imposing SDS on the full domain is a dictatorship. Hence, \Cref{thm:groupCond} and \citeauthor{Barb79a}'s results share a common idea: group-strategyproof and non-imposing SDSs cannot rely on randomization to determine the winner. However, whereas only undesirable SDSs are group-strategyproof and non-imposing on $\mathcal{R}^N$, the attractive Condorcet rule satisfies these axioms on $\mathcal{D}_C$.
\end{remark}

\section{Conclusion}

We study strategyproof and non-imposing social decision schemes (SDSs) on the Condorcet domain (which consists of all preference profiles with a Condorcet winner) and its supersets. These domains are of great relevance because empirical results suggest that real-world election commonly admit a Condorcet winner. In contrast to the full domain, there are attractive strategyproof SDSs on the Condorcet domain: we show that, if the number of voters $n$ is odd, every strategyproof and non-imposing SDS on the Condorcet domain can be represented as a mixture of a random dictatorship and the Condorcet rule. An immediate consequence of this insight is that the Condorcet rule is the only strategyproof, non-imposing, and completely non-randomly dictatorial SDS on the Condorcet domain if $n$ is odd. Moreover, we demonstrate that, if $n$ is odd, the Condorcet domain is a maximal connected domain that allows for strategyproof and non-imposing SDSs other than random dictatorships. We also derive analogous results for even $n$ by slightly extending the Condorcet domain. Finally, we investigate  the set of group-strategyproof and non-imposing SDSs on super Condorcet domains: we prove that the Condorcet rule is the only non-dictatorial, group-strategyproof, and non-imposing SDS on the Condorcet domain, and that no SDS satisfies these axioms on larger domains. 

Our results for the Condorcet domain show an astonishing similarity to classic results for the full domain, but have a more positive flavor. For instance, while the random dictatorship theorem shows that only mixtures of dictatorial SDSs are strategyproof and non-imposing on the full domain, we prove in \Cref{thm:cond} that mixtures of dictatorial SDSs and the Condorcet rule are the only strategyproof and non-imposing SDSs on the Condorcet domain (if the number of voters is odd). A more exhaustive comparison between results for the full domain and for the Condorcet domain is given in \Cref{tab:FDvsCD}. In particular, our results highlights the important role of the Condorcet rule on the Condorcet domain: even if we allow for randomization, it is still the most appealing strategyproof voting rule. Thus, our theorems make a strong case for choosing the Condorcet winner whenever it exists. 

\section*{Acknowledgements}

{This work was supported by the Deutsche Forschungsgemeinschaft under grant \mbox{BR 2312/12-1}.
We thank the anonymous reviewers for helpful comments.
}

\newpage
\appendix

\section{Omitted Proofs}

In this appendix, we discuss the omitted proofs of \Cref{lem:connected,lem:DCx} and \Cref{thm:condeven,thm:groupCond}. We start by formally proving \Cref{lem:connected}.

\connected*
\begin{proof}
The proofs for the Condorcet domain $\mathcal{D}_C$ and odd $n\geq 3$ and for the tie-breaking Condorcet domain $\mathcal{D}_C^\ssucc$ and even $n\geq 4$ work almost identically. In particular, we can use the same ad-paths for both domains and only the arguments on why the ad-paths remain in the respective domain slightly change. We therefore focus only on the Condorcet domain in this proof. 

Hence, assume that $n\geq 3$ is odd, consider two profiles $R, R'\in\mathcal{D}_C$, and let $c$ and $c$' denote the respective Condorcet winners. We proceed in multiple steps to show that $\mathcal{D}_C$ is connected. First, we show that this domain is weakly connected by constructing an ad-path from $R$ to $R'$. Next, we suppose that there is an alternative $x\in A$ such that $U(\succ_i, x)= U(\succ_i',x)$ for all $i\in N$. We then consider several cases depending on whether $c=c'$ and $x\in \{c,c'\}$ and always construct an ad-path from $R$ to $R'$ along which $x$ is not moved.\medskip

\textbf{Step 1: $\mathcal{D}_C$ is weakly connected}

For proving this claim, we need to construct an ad-path from $R$ to $R'$ in $\mathcal{D}_C$. For doing so, we start at $R$ by reinforcing $c$ until it is the best alternative of all voters $i\in N$. This leads to a profile $R^1$ and, since we only reinforce the Condorcet winner, we do not leave the Condorcet domain. Next, if $c\neq c'$, we reinforce $c'$ until we arrive in the profile $R^2$ in which every voter ranks $c$ first and $c'$ second. During all these steps, $c$ is unanimously top-ranked and therefore the Condorcet winner. Next, we let the voters swap $c$ and $c'$ one after another. Since $n$ is odd, $c$ or $c'$ are top-ranked by more than half of the voters in each of these profiles, which shows that we do not leave the Condorcet domain. Finally, we are now in a profile $R^3$ in which all voters top-rank $c'$. Note that if $c=c'$, we can simply set $R^3=R^1$. As the next step, we use swaps to reorder the alternatives in $A\setminus \{c'\}$ according to $R'$. These steps result in the profile $R^4$ which only differs from $R'$ in the fact that $c'$ is unanimously top-ranked. Since this also holds during all steps in the construction of this profile, we do not leave the Condorcet domain. Finally, we weaken $c'$ to derive $R'$ from $R^4$. Note that, since $U(\hat \succ_i,c')\subseteq U(\succ_i',c')$ for all voters $i\in N$ and intermediate profiles $\hat R$, $ c'$ is the Condorcet winner during all steps. Hence, there is an ad-path from $R$ to $R'$ in $\mathcal{D}_C$ and the Condorcet domain is weakly connected if $n$ is odd.\medskip

\textbf{Step 2.1: $x=c$}

Next, we suppose that there is an alternative $x\in A$ such that $U(\succ_i,x)= U(\succ_i',x)$ for all $i\in N$. In this case, we need to construct an ad-path from $R$ to $R'$ along which $x$ is not moved. For constructing this ad-path, we use a case distinction with respect to $x$, $c$, and $c'$, and first assume that $x=c$. This implies that $c=c'$ because we cannot change the Condorcet winner without moving this alternative. As a consequence, we can simply reorder the alternatives $y,z\in A\setminus \{c\}$ to transform $R$ to $R'$. Since $U(\succ_i, x)=U(\succ_i',x)$ for all $i\in N$, we never need to swap $x$ and thus also do not leave the Condorcet domain.\medskip

\textbf{Step 2.2: $c=c'$ and $x\neq c$}

As the second case, suppose that $c=c'$ but $x\neq c$ and let $I=\{i\in N\colon c \succ_i x\}$ denote the set of voters who prefer $c$ to $x$ in $R$. As the first step, we consider the profile $R^1$ which is defined as follows: the voters $i\in I$ top-rank $c$ and order all alternatives as in $R$, and the voters $i\in N\setminus I$ rank $c$ directly below $x$ and order the alternatives $A\setminus \{c\}$ according to $R$. Clearly, we can move from $R$ to $R^1$ by only reinforcing $c$ and thus, $R^1$ and all intermediate profiles are in the Condorcet domain. As the second step, we analyze the profile $R^2$ which is defined as follows: all voters $i\in I$ top-rank $c$ and order the remaining alternatives according to $R'$, and all voters $i\in N\setminus I$ place $c$ directly below $x$ and reorder the remaining alternatives according to $R'$. Since $U(\succ_i',x)=U(\succ_i,x)=U(\succ_i^2,x)$ for all $i\in N$, no voter needs to swap $x$ during any of these steps. Indeed, $x$ partitions the alternatives in two sets for every voter $i$: $U_i=\{y\in A\colon y\succ_i x\}$ and $L_i=\{y\in A\colon x\succ_i y\}$. Since we only need to reorder alternatives within these sets to go from $R^1$ to $R^2$, no swap involves $x$. Moreover, since $c$ is the best alternative within $U_i$ (if $i\in I$) or $L_i$ (if $i\in N\setminus I$) in both $R^1$ and $R^2$, we also do not need to swap $c$. This ensures that we do not leave the Condorcet domain during these steps. Finally, we can go from $R^2$ to $R'$ by only weakening $c$. Since $U(\hat \succ_i,c)\subseteq U(\succ_i',c)$ for all $i\in N$ and profiles $\hat R$ on this ad-path, $c$ always remains the Condorcet winner. This completes the ad-path from $R$ to $R'$ and it is easy to see that $x$ is never swapped with another alternative along it.\medskip

\textbf{Step 2.3: $c\neq c'$ and $x\not \in \{c,c'\}$}

As last case, we assume that $c\neq c'$ and $x\not\in \{c,c'\}$. In this case, let $I_1=\{i\in N\colon c \succ_i x\}$ and $I_2=\{i\in N\colon c' \succ_i x\}$ denote the sets of voters who prefer $c$ and $c'$, respectively, to $x$. Since $U(\succ_i,x)=U(\succ_i',x)$ for all $i\in N$, it does not matter whether we define $I_1$ and $I_2$ with respect to $R$ or $R'$. Now, consider the profile $R^1$ constructed as follows: 
\begin{itemize}[leftmargin=*,topsep=4pt]
    \item All $i\in N$ order all alternatives in $A\setminus \{c,c'\}$ as in $R'$.
    \item All $i\in I_1\setminus I_2$ prefer $c$ the most and rank $c'$ directly below $x$.
    \item All $i\in I_2\setminus I_1$  prefer $c'$ the most and rank $c$ directly below $x$.
    \item All $i\in I_1\cap I_2$ prefer $c$ the most and $c'$ the second most. 
    \item All $i\in N\setminus (I_1\cup I_2)$ rank $c$ directly below $x$ and $c'$ directly below~$c$.
\end{itemize}
    
    In particular, note that $c$ is top-ranked by all voters in $I_1$ and placed directly below $x$ by all voters $i\in N\setminus I_1$. Moreover, it holds that $U(\succ_i^1,x)=U(\succ_i,x)$ for all $i\in N$. This implies that $U(\succ_i^1,c)\subseteq U(\succ_i,c)$ for all $i\in N$ and thus, $c$ is the Condorcet winner in $R^1$. We can therefore use the construction of Step 2.2 to find an ad-path from $R$ to $R^1$ along which $x$ is never moved. 
    
    Next, let $I=(I_1\cap I_2)\cup (N\setminus(I_1\cup I_2))$ and consider the profile $R^2$ derived from $R^1$ by letting the voters $i\in I$ swap $c$ and $c'$ one after another. Note that at $I\neq \emptyset$ because $y \succ_i x \succ_i z'$ implies that $y\succ_i' x \succ_i' z$ for all voters $i\in N$. Or put differently, the voters in $I_1\setminus I_2$ and $I_2\setminus I_1$ cannot swap $c$ and $c'$ as this would require them to change the upper contour set of $x$. Since $c\neq c'$, it thus follows that $I\neq \emptyset$. Next, note that $c \succ_i y$ implies that $c\succ_i^k y$ and $c' \succ_i' y$ implies $c'\succ_i^k y$ for all voters $i\in N$, alternatives $y\in A\setminus \{c,c'\}$, and $k\in \{1,2\}$. This claim holds since we are not allowed to swap $x$ with any alternative, and $c$ and $c'$ are either top-ranked or directly below $x$. In particular, this means that $g_{R^k}(c,y)>0$ and $g_{R^k}(c',y)>0$ for all $y\in A\setminus \{c,c'\}$ and $k\in \{1,2\}$. Finally, this analysis also holds for all profiles $\hat R$ on the ad-path between $R^1$ and $R^2$ and thus $c$ is the Condorcet winner if $g_{\hat R}(c,c')>0$ and $c'$ otherwise. Moreover, in $R^2$, $c'$ must be the Condorcet winner because all voters $i\in I^2$ top-rank $c'$ and all other voters place it directly below $x$, i.e., $U(\succ_i^2,c')\subseteq U(\succ_i',c')$ for all $i\in N$. Finally, we can use again the construction of Step 2.2 to go from $R^2$ to $R'$ as $c'$ is the Condorcet winner in both profiles. This completes the proof.
\end{proof}

Next, we turn to the proof of \Cref{lem:DCx}. 

\DCx*
\begin{proof}
Let $\mathcal{D}$ denote a super Condorcet domain, and let $f:\mathcal{D}\rightarrow\Delta(A)$ denote a strategyproof and \emph{ex post} efficient SDS. Moreover, fix an alternative $a\in A$. For proving the lemma, we will reason about multiple subdomains of $\mathcal{D}_C^a$.

Throughout the proof, we use some additional notation. First, we define $t=\lceil\frac{n+1}{2}\rceil$ as the smallest integer larger than $\frac{n}{2}$. Moreover, given a set of voters $I$ with $\emptyset\subsetneq I\subsetneq N$ and two profiles $R^1\in\mathcal{R}^{I}$, $R^2\in \mathcal{R}^{N\setminus I}$, we define the $R^1+R^2$ as the profile $R$ on the electorate $N$ with ${\succ_i}={\succ^1_i}$ if $i\in I$ and ${\succ_i}={\succ^2_i}$ if $i\in N\setminus I$. Finally, we define the rank of an alternative $x$ in a preference relation $\succ_i$ as $r(\succ_i, x)=|\{y\in A\colon y\succ_i x\lor y=x\}|$. For instance, $r(\succ_i,x)=1$ means that $x$ is voters $i$'s favorite alternative.\medskip

\textbf{Step 1:} Fix a set of voters $I\subseteq N$ with $|I|=t$ and an alternative $b\in A\setminus \{a\}$. Moreover, let $\mathcal{D}_1^{I,b}=\{R\in\mathcal{R}^N\colon \forall i\in I\colon r(\succ_i,a)=1\land \forall i\in N\setminus I\colon r(\succ_i,b)=1\}$ denote the domain in which all voters in $I$ always top-rank $a$ and all voters in $N\setminus I$ always top-rank $b$. We show that $f(R,a)+f(R,b)=1$ for all profiles $R\in\mathcal{D}_1^{I,b}$. For this, consider an arbitrary profile $R\in\mathcal{D}_1^{I,b}$ and let $R^*\in\mathcal{D}_1^{I,b}$ denote the profile shown below.

\begin{profile}{L{0.1\columnwidth} L{0.3\columnwidth} L{0.3\columnwidth}}
        $R^*$: & $I$: $a,b,\dots$ & $N\setminus I$: $b,a,\dots$
\end{profile}

\emph{Ex post} efficiency requires for $R^*$ that $f(R^*,a)+f(R^*,b)=1$ as all other alternatives are Pareto-dominated. Next, consider the profiles $R^1$ and $R^2$ which are defined as follows: in $R^1$, the voters $i\in I$ report $\succ^*_i$ and the voters $i\in N\setminus I$ report $\succ_i$. Conversely, in $R^2$, the voters $i\in I$ report $\succ_i$ and the voters $i\in N\setminus I$ report $\succ^*_i$. Next, we show that $f(R,a)=f(R^1,a)=f(R^*,a)$. Note for this that $U(\succ_i^*,b)=U(\succ^1_i,b)$ for all voters $i\in N$ because ${\succ_i^1}={\succ_i^*}$ for all voters $i\in I$ and $r(\succ_i^*,b)=r(\succ_i^1,b)=1$ for all voters $i\in N\setminus I$. Now, there is clearly an ad-path from $R^*$ to $R^1$ along which $b$ is never swapped and we can therefore infer from localizedness that $f(R^1,b)=f(R^*,b)$. Moreover, $b$ Pareto-dominates all other alternatives $z\in A\setminus \{a,b\}$ in $R^1$, so \emph{ex post} efficiency requires that $f(R^1,a)+f(R^1,b)=1$. Hence, we conclude that $f(R^1,a)=1-f(R^1,b)=1-f(R^*,b)=f(R^*,a)$. 
	
As the next step, we transform $R^1$ into $R$ by reordering the alternatives in $A\setminus \{a\}$ in the preference relations of the voters $i\in I$. This time, we have that $U(\succ^1_i,a)=U(\succ_i,a)$ for all voters $i\in N$ because $r(\succ_i^1,a)=r(\succ_i, a)=1$ for all $i\in I$ and ${\succ_i^1}={\succ_i}$ for all $i\in N\setminus I$. Hence, an analogous argument as before shows that $f(R,a)=f(R^1,a)=f(R^*,a)$. Moreover, we can use a symmetric argument to derive that $f(R,b)=f(R^2,b)=f(R^*,b)$. This proves that $f(R,a)+f(R,b)=1$ for all preference profiles $R\in \mathcal{D}_1^{I,b}$.\medskip

\textbf{Step 2:} Fix a set of voters $I\subseteq N$ with $|I|=t$ and a profile $R^*\in\mathcal{R}^I$ in which all voters top-rank $a$. In this step, we consider the domain $\mathcal{D}_2^{I,R^*}=\{R^*+R\colon R\in\mathcal{R}^{N\setminus I}\}$, i.e., the voters $i\in I$ have to report $\succ^*_i$ and the voters in $N\setminus I$ can report arbitrary preference relations. Our goal is to show that there are values $\gamma_C\geq 0$ and $\gamma_i\geq 0$ for all $i\in N\setminus I$ such that $f(R)=\gamma_C \mathit{COND}(R)+\sum_{i\in N\setminus I} \gamma_i d_i(R)$ for all $R\in\mathcal{D}_2^{I,R^*}$.

For proving this, let $\bar R$ denote a profile in $\mathcal{D}_2^{I,R^*}$ in which all voters $i\in N\setminus I$ prefer $a$ the least and define $\delta_a=f(\bar R,a)$. We first show that $f(R,a)\geq \delta_a$ for all profiles $R\in\mathcal{D}_2^{I,R^*}$. For this, consider an arbitrary profile $R\in\mathcal{D}_2^{I,R^*}$ and let $R'$ denote the profile derived from $R$ by letting each voter $i\in N\setminus I$ make $a$ into his least preferred alternative. Clearly, there is an ad-path from $\bar R$ to $R'$ in $\mathcal{D}_2^{I,R^*}$ along which $a$ is never swapped and thus, $f(R',a)= f(\bar R, a)=\delta_a$ because of localizedness. On the other hand, we only need to reinforce $a$ to go from $R'$ to $R$. Hence, non-perversity shows that $f(R,a)\geq f(R',a)=\delta_a$, which proves our claim.

Now, if $\delta_a=1$, this means that $f(R,a)=1$ and therefore $f(R)=\mathit{COND}(R)$ for all profiles $R\in\mathcal{D}_2^{I,R^*}$. In this case, Step 2 is proven as we can choose $\gamma_C=1$ and $\gamma_i=0$ for $i\in N\setminus I$. Thus, we suppose that $0\leq \delta_a<1$ and define the SDS $g_{R^*}(R)$ for the electorate $N\setminus I$ as follows: $g_{R^*}(R,x)=\frac{1}{1-\delta_a} f(R^*+R,x)$ for $x\in A\setminus \{a\}$ and $g_{R^*}(R,a)=\frac{1}{1-\delta_a}(f(R^*+R,a)-\delta_a)$. In particular, note that $g_{R^*}$ is defined on $\mathcal{R}^{N\setminus I}$, i.e., it is defined on the full domain with respect to $N\setminus I$. Subsequently, we show that $g_{R^*}$ is a well-defined, strategyproof, and non-imposing SDS because the random dictatorship theorem then entails that $g_{R^*}$ is a random dictatorship.

First, we prove that $g_{R^*}$ is well-defined and note for this that $\sum_{x\in A} g_{R^*}(R,x)=\frac{1}{1-\delta_a} \sum_{x\in A} f(R^*+ R,x) - \frac{\delta_a}{1-\delta_a}=1$ for every profile $R\in \mathcal{R}^{N\setminus I}$ because $\sum_{x\in A} f(R^*+ R,x)=1$. Moreover, it clearly holds that $g_{R^*}(R,x)=\frac{1}{1-\delta_a} f(\bar R+R,x)\geq 0$ for every $x\in A\setminus \{a\}$. Finally, $g_{R^*}(R,a)=\frac{1}{1-\delta_a} (f(R^*+R,a)-\delta_a)\geq 0$ because $f(R^*+R,a)\geq \delta_a$. Hence, $g_{R^*}$ is indeed a well-defined SDS. Moreover, $g_{R^*}$ inherits strategyproofness from $f$. In more detail, if a voter $i\in N\setminus I$ could manipulate $g_{R^*}$ by deviating from $R$ to $R'$, then he could also manipulate $f$ by deviating from $R^*+R$ to $R^*+R'$. Finally, we note that $g_{R^*}$ is non-imposing. For $a$, this follows immediately because $g_{R^*}(R)=\frac{1}{1-\delta_a}(f(R^*+R,a)-\delta_a)=1$ for every profile $R\in\mathcal{R}^{N\setminus I}$ in which all voters $i\in N\setminus I$ top-rank $a$ because \emph{ex post} efficiency requires that $f(R^*+R,a)=1$. For the alternatives $x\in A\setminus \{a\}$, this follows by considering a profile $R\in\mathcal{R}^{N\setminus I}$ such that all voters top-rank $x$ and bottom-rank $a$. It is not difficult to see that $f(R^*+R,a)=f(\bar R,a)=\delta_a$ because of localizedness. On the other hand, Step 1 shows that $f(R^*+R,a)+f(R^*+R,x)=1$. Hence, we conclude that $g_{R^*}(R,x)=\frac{1}{1-\delta_a} f(R^*+R,x)=1$ and $g_{R^*}$ is indeed non-imposing. 

Since $g_{R^*}$ is a strategyproof and non-imposing SDS on the full domain (with respect to $N\setminus I$), the random dictatorship theorem shows that $g_{R^*}$ must be a random dictatorship. Or, put differently, there are values $\gamma_i\geq 0$ for $i\in N\setminus I$ such that $g_{R^*}(R)=\sum_{i\in N\setminus I} \gamma_i d_i(R)$ for all $R\in\mathcal{R}^{N\setminus I}$. Since $g_{R^*}(R,x)=\frac{1}{1-\delta_a} f(R^*+R,x)$ for $x\in A\setminus \{a\}$ and $g_{R^*}(R,a)=\frac{1}{1-\delta_a} (f(R^*+R,a)-\delta_a)$, it follows that $f(R^*+R,x)=(1-\delta_a)\sum_{i\in N\setminus I} \gamma_i d_i(R^*+R,x)$ and $f(R^*+R,a)=\delta_a+(1-\delta_a)\sum_{i\in N\setminus I}\gamma_i d_i(R^*+R,a)$. Finally, let $\gamma_C'=\delta_a$ and $\gamma_i'=(1-\delta_a)\gamma_i$ for $i\in N\setminus I$. Since $a$ is the Condorcet winner for all $R\in\mathcal{D}_2^{I,R^*}$, it is easy to see that $f(R)=\gamma'_C\mathit{COND}(R)+\sum_{i\in N\setminus I} \gamma_i' d_i(R)$ for all $R\in\mathcal{D}_2^{I, R^*}$.\medskip

\textbf{Step 3:} Once again, we fix a set of voters $I\subseteq N$ with $|I|=t$. In this step, we analyze the domain $\mathcal{D}_3^I=\{R\in\mathcal{R}^N\colon \forall i\in I\colon r(\succ_i,a)=1\}$, i.e., the voters in $I$ have to top-rank $a$, but otherwise the domain is not constrained. Our goal is to show that there are $\gamma_C\geq 0$ and $\gamma_i\geq 0$ for $i\in N\setminus I$ such that $f(R)=\gamma_C\mathit{COND}(R)+\sum_{i\in N\setminus I} \gamma_i d_i(R)$. 

For this, consider two profiles $R^1, R^2\in\mathcal{R}^I$ in which all voters $i\in I$ unanimously top-rank $a$. By Step 3, there are $\gamma_C^x\geq 0$ and $\gamma_i^x\geq 0$ such that $f(R^x+R)=\gamma_C^x\mathit{COND}(R^x+R)+\sum_{i\in N\setminus I} \gamma_i d_i(R^x+R)$ for all profiles $R\in\mathcal{R}^{N\setminus I}$ and $x\in \{1,2\}$. Our goal is to show that $\gamma_C^1=\gamma_C^2$ and $\gamma_i^1=\gamma_i^2$ for all $i\in N\setminus I$. Assume for contradiction that this is not the case, which implies that $R^1\neq R^2$. Now, consider an ad-path $(\hat R^1, \dots, \hat R^l)$ from $R^1$ to $R^2$ in $\mathcal{R}^I$ along which $a$ is always unanimously top-ranked. Clearly, $\hat R^k+R\in\mathcal{D}_3^I$ for all $k\in \{1,\dots, l\}$ and we can thus also use Step 2 for each of these intermediate profiles. In particular, there must be two consecutive profiles $R^3$ and $R^4$ on this ad-path such that $\gamma_C^3\neq \gamma_C^4$ or $\gamma_i^3\neq \gamma_i^4$ for some $i\in N\setminus I$. Since $R^3$ and $R^4$ are consecutive on the ad-path, $R^4$ evolves out of $R^3$ by swapping two alternatives $x,y\in A\setminus \{a\}$ in the preference relation of a single voter $i^*\in I$.

First suppose that $\gamma_C^3\neq \gamma_C^4$ and consider a profile $R\in\mathcal{R}^{N\setminus I}$ such that $a$ is not top-ranked by any voter. Using Step 2, we infer that $f(R^3+R,a)=\gamma_C^3$ and $f(R^4+R,a)=\gamma_C^4$ because $R^3+R\in \mathcal{D}_2^{I, R^3}$ and $R^4+R\in\mathcal{D}_2^{I, R^4}$. Hence, $f(R^3+R,a)\neq f(R^4+R,a)$. However, this contradicts localizedness as we can transform $R^3+R$ into $R^4+R$ by only swapping $x$ and $y$ in the preference relation of voter $i^*$. Hence, we must have $\gamma_C^3= \gamma_C^4$. As second case, suppose that there is a voter $i\in N\setminus I$ such that $\gamma_i^3\neq \gamma_i^4$. In this case, consider the profile $R\in\mathcal{R}^{N\setminus I}$ such that only voter $i$ top-ranks $a$. Using again the insights of Step 2, we derive that $f(R^3+R,a)=\gamma_C^3+\gamma_i^3\neq \gamma_C^4+\gamma_i^4=f(R^4+R,a)$. In particular, we use here that $\gamma_C^3=\gamma_C^4$ because of the first case. Now, we can again move from $R^3+R$ to $R^4+R$ by only swapping $x$ and $y$ in the preference relation of voter $i^*$, and we thus have a contradiction to localizedness. This shows that $\gamma_C^3=\gamma_C^4$ and $\gamma_i^3=\gamma_i^4$ for all $i\in N\setminus I$ and consequently, the same claim holds for all profiles $R^1, R^2\in\mathcal{R}^I$ in which $a$ is unanimously top-ranked. 

Based on this insight, choose an arbitrary profile $\hat R\in\mathcal{R}^I$ in which all voters top-rank $a$ and let $\gamma_C\geq 0$ and $\gamma_i\geq 0$ for $i\in N\setminus I$ denote the respective values such that $f(\hat R+R)=\gamma_C\mathit{COND}(\hat R+R)+\sum_{i\in N\setminus I} \gamma_i d_i(\hat R+R)$ for all $R\in\mathcal{R}^{N\setminus I}$. It follows from our previous analysis and Step 2 that $f(R)=\gamma_C\mathit{COND}(R)+\sum_{i\in N\setminus I} \gamma_i d_i(R)$ for all $R\in\mathcal{D}_3^I$ because $R\in\mathcal{D}_2^{I,R'}$, where $R'$ is the restriction of $R$ to the voters $i\in I$.\medskip

\textbf{Step 4:} In this step, we analyze $f$ on the domain $\mathcal{D}_4$ which consists of the profiles in which at least $t$ voters top-rank $a$, i.e., $\mathcal{D}_4=\{R\in \mathcal{R}^n\colon \exists I\subseteq N\colon |I|=t\land \forall i\in I\colon r(\succ_i,a)=1\}$. Our goal is to find values $\gamma_C$ and $\gamma_i\geq 0$ for all $i\in N$ such that $f(R)=\gamma_C\mathit{COND}(R) + \sum_{i\in N} \gamma_i d_i(R)$ for all $R\in\mathcal{D}_4$. In particular, note that $\gamma_C$ can be negative after this step.

For this, consider two sets of voters $I_1\subseteq N$ and $I_2\subseteq N$ with $|I_1|=|I_2|=t$. By Step 4, there are values $\gamma^x_C\geq 0$ and $\gamma_i^x\geq 0$ for $i\in N\setminus I_x$ such that $f(R)=\gamma^x_C \mathit{COND}(R)+\sum_{i\in N\setminus I_x} \gamma_i^x d_i(R)$ for all profiles $R\in\mathcal{D}_3^{I_x}$ and $x\in \{1,2\}$. Next, we show that $\gamma_i^{1}=\gamma_i^{2}$ for all voters $i\in N\setminus (I_1\cup I_2)$. For this, consider the profile $R$ in which all voters $j\in N\setminus \{i\}$ top-rank $a$, and voter $i$ top-ranks another alternative $x$. Note that $R$ is both in $\mathcal{D}^{I_1}_3$ and $\mathcal{D}^{I_2}_3$. Hence, it follows from Step 4 that $f(R,x)=\gamma_i^{1}$ and $f(R,x)=\gamma_{i}^{2}$. This implies that $\gamma_i^1=\gamma_i^2$, which proves our claim.

Next, consider three sets of voters $I_1, I_2, I_3$ such that $|I_1|=|I_2|=|I_3|=t$ and $N=(N\setminus I_1)\cup (N\setminus I_2)\cup (N\setminus I_3)$. We define $\gamma_i=\gamma_i^{I_1}$ if $i\in N\setminus I_1$, $\gamma_i=\gamma_i^{I_2}$ if $i\in I_1\setminus I_2$, and $\gamma_i=\gamma_i^{I_3}$ if $i\in (I_1\cap I_2)\setminus I_3$. Here, we use $\gamma_i^{I_x}$ for the values derived for $I_x$ in Step 4. In particular, $\gamma_i$ is defined for all $i\in N$ and $\gamma_i\geq 0$. Moreover, let $\gamma_C=1-\sum_{i\in N}\gamma_i$. We claim that $f(R)=\gamma\mathit{COND}(R)+\sum_{i\in N}\gamma_i d_i(R)$ for all $R\in\mathcal{D}_4$. For showing this, consider an arbitrary profile $R\in \mathcal{D}_{4}$. By the definition of this domain, there is a set of voters $I$ with $|I|=t$ such that all voters in $I$ top-rank $a$. Hence, we can use Step 4 to derive that $f(R)=\gamma^I_C \mathit{COND}(R)+\sum_{i\in N\setminus I} \gamma_i^I d_i(R)$ for some values $\gamma^I_C\geq 0$ and $\gamma_i^I\geq 0$ for all $i\in N\setminus I$. By the insights of the last paragraph, it holds that $\gamma_i^I=\gamma_i^{I_1}$ if $i\in N\setminus I_1$, $\gamma_i^I=\gamma_i^{I_2}$ if $i\in I_1\setminus I_2$, and $\gamma_i^I=\gamma_i^{I_3}$ if $i\in (I_1\cap I_2)\setminus I_3$. This means that $\gamma_i^I=\gamma_i$ for all $i\in N\setminus I$ and that $\gamma^I_C=1-\sum_{i\in N\setminus I} \gamma_i^I=1-\sum_{i\in N\setminus I} \gamma_i=\gamma_C+\sum_{i\in I}\gamma_i$. Since all voters $i\in I$ top-rank $a$, it follows that $f(R)=\gamma_C \mathit{COND}(R) + \sum_{i\in N}\gamma_i d_i(R)$, which proves this step.\medskip

\textbf{Step 5:} Finally, we consider the domain $\mathcal{D}_C^a$ that contains all profiles in which $a$ is the Condorcet winner. Note that $\mathcal{D}_4\subseteq \mathcal{D}_C^a$. Hence, there are values $\gamma_C$ and $\gamma_i\geq 0$ for all $i\in N$ such that $f(R)=\gamma_C\mathit{COND}(R)+\sum_{i\in N} \gamma_i d_i(R)$ for all $R\in\mathcal{D}_4$ because of Step 4. We will show that the same holds for all profiles $R\in\mathcal{D}_C^a$.

For this, consider an arbitrary profile $R\in\mathcal{D}_C^a$. If $R\in\mathcal{D}_4$, the claim follows immediately and we hence suppose that $R\not\in\mathcal{D}_4$. Now, consider an alternative $x\in A\setminus \{a\}$. We claim that $f(R,x)=\sum_{i\in N} \gamma_i d_i(R,x)$. For proving this, note that there is a set of voters $I\subseteq N$ such that $|I|=t$ and $a \succ_i x$ for all $i\in I$ because $a$ is the Condorcet winner in $R$. Next, consider the profile $R'$ derived from $R$ by letting the voters $i\in I$ reinforce $a$ until it is their most preferred alternative. Clearly, $R'\in \mathcal{D}_4$ and we infer that $f(R',x)=\sum_{i\in N} \gamma_i d_i(R',x)$. On the other hand, we do not move $x$ in the transition from $R$ to $R'$ as all voters in $I$ already prefer $a$ to $x$. Hence, $d_i(R,x)=d_i(R',x)$ for all voters $i\in N$ and localizedness shows that $f(R,x)=f(R',x)=\sum_{i\in N} \gamma_i d_i(R,x)$. Finally, since this holds for all $x\in A\setminus \{a\}$, it follows that $f(R,a)=1-\sum_{x\in A\setminus \{a\}} f(R,x)=1-\sum_{i\in N\colon r(\succ_i,a)>1} \gamma_i=\gamma_C+\sum_{i\in N\colon r(\succ_i,a)=1} \gamma_i=\gamma_C\mathit{COND}(R,a)+\sum_{i\in N} \gamma_i d_i(R,a)$. In particular, we use here that $\gamma_C=1-\sum_{i\in N}\gamma_i$ by the definition in Step 5. Since $R$ is chosen arbitrarily, this proves that $f(R)=\gamma_C\mathit{COND}(R)+\sum_{i\in N} \gamma_i d_i(R)$ for all $R\in\mathcal{D}_C^a$.
\end{proof}

Next, we turn to the proof of \Cref{thm:condeven}.

\condeven*
\begin{proof}
    Assume $n\geq 4$ is even, fix a preference relation ${\ssucc}\in\mathcal{R}$, and consider an arbitrary connected domain $\mathcal{D}$ with $\mathcal{D}_C^\ssucc\subseteq \mathcal{D}$. First, we show that every strategyproof and non-imposing SDS on $\mathcal{D}$ is \emph{ex post} efficient. For this, we assume for contradiction that $f$ is such an SDS but fails \emph{ex post} efficiency, which means that there are a profile $R^1\in \mathcal{D}$ and two alternatives $x,y\in A$ such that $x\succ_i^1 y$ for all $i\in N$ but $f(R^1,y)>0$. By connectedness and localizedness, it follows that $f(R^2,y)>0$ for the profile $R^2$ derived from $R^1$ by making $x$ into the favorite alternative of every voter. On the other hand, there is by non-imposition a profile $R^3$ such that $f(R^3,x)=1$. If $x$ is the Condorcet winner in $(R^3, \ssucc)$, we can reinforce $x$ until it is top-ranked by all voters and derive a contradiction just as in \Cref{lem:PO}. 
    
    Hence, suppose that there is an alternative $z\in A$ such that $g_{(R^3,\ssucc)}(z,x)>0$. Moreover, if there are multiple such alternatives, we assume that $z$ is the best alternative according to $\ssucc$, i.e., $z\ssucc z'$ if $g_{(R^3,\ssucc)}(z,x)>0$ and $g_{(R^3,\ssucc)}(z',x)>0$. Now, let $I=\{i\in N\colon z\succ_i^3 x\}$ denote the set of voters who prefer $z$ to $x$ in $R^3$ and note that $|I|\geq\frac{n}{2}$. We consider the profile $R^4$ derived from $R^3$ by letting the voters $i\in I$ make $z$ into their favorite alternative and the voters $i\in N\setminus I$ ranks $z$ directly below $x$. Our next goal is to show that $R^4\in\mathcal{D}_C^\ssucc\subseteq\mathcal{D}$ because $z$ is the Condorcet winner in $(R^4,\ssucc)$. If $|I|>\frac{n}{2}$, this is clear and we hence suppose that $|I|=\frac{n}{2}$. This means that $z\ssucc x$ as otherwise $g_{(R^3,\ssucc)}(z,x)<0$. If $z$ was not the Condorcet winner in $(R^4, \ssucc)$, there is another alternative $z'\in A\setminus \{x\}$ such that $g_{(R^4,\ssucc)}(z',z)>0$. This is only possible if all voters $i\in N\setminus I$ prefer $z'$ to $z$ and if $z'\ssucc z$. Because all voters $i\in N\setminus I$ rank $z$ directly below $x$ in $R^4$, $z'\succ_i^4 z$ implies $z'\succ_i^4 x$ for all $i\in N\setminus I$ and the transitivity of $\ssucc$ shows that $z'\ssucc x$. However, then $g_{(R^3,\ssucc)}(z',x)>0$ and $z'\ssucc z$, which contradicts the definition of $z$. Hence, no such alternative $z'$ exists and $z$ is indeed the Condorcet winner in $(R^4,\ssucc)$. Moreover, note that $U(\succ_i^3,x)=U(\succ_i^4,x)$ for all $i\in N$. Localizedness and connectedness thus show that $f(R^3,x)=f(R^4,x)$. From this observation on, we can derive a contradiction analogous to the proof of \Cref{lem:PO}. Hence, it follows that every strategyproof and non-imposing SDS on $\mathcal{D}$ is indeed \emph{ex post} efficient.
    
    Next, we will prove Claims~(1) and~(2).
    
    \smallskip
    
    \textbf{Proof of Claim (1)}: Suppose that $\mathcal{D}=\mathcal{D}_C^\ssucc$. We first prove the direction from right to left and show that the tie-breaking Condorcet $\mathit{COND}^\ssucc$ rule is strategyproof and non-imposing on $\mathcal{D}_C^\ssucc$. For this, note that for every $R\in\mathcal{D}_C^\ssucc$, the profile $(R,\ssucc)$ is in the Condorcet domain for $n+1$ voters. Even more, $\mathit{COND}^\ssucc(R)=\mathit{COND}(R,\ssucc)$ for all these profiles. Since $\mathit{COND}$ is strategyproof on the Condorcet domain, this implies that $\mathit{COND}^\ssucc$ is strategyproof on the tie-breaking Condorcet domain $\mathcal{D}_C^\ssucc$. Moreover, $\mathit{COND}^\ssucc$ is clearly non-imposing as it chooses an alternative with probability $1$ if it is unanimously top-ranked. It follows now from the same arguments as in the proof of Claim (1) of \Cref{thm:cond} that every mixture of a random dictatorship and the tie-breaking Condorcet rule is strategyproof and non-imposing on $\mathcal{D}_C^\ssucc$.
    
    For the other direction, let $f$ denote a strategyproof and non-imposing SDS on the tie-breaking Condorcet domain $\mathcal{D}_C^\ssucc$. By our previous arguments, it follows that $f$ is \emph{ex post} efficient. Hence, \Cref{lem:DCx} applies for $f$ and shows that for every alternative $x\in A$, there are values $\gamma_C^x$ and $\gamma_i^x\geq 0$ for all $i\in N$ such that $f(R')=\gamma_C^x\mathit{COND}(R')+\sum_{i\in N} \gamma_i^xd_i(R')$ for all $R'\in\mathcal{D}_C^x$. From this observation on, we proceed in multiple steps to prove Claim (1). In particular, we show first that $\gamma^x_C=\gamma^y_C$ and $\gamma_i^x=\gamma_i^y$ for all $i\in I$ and $x,y\in A$. This means that $f(R)=\gamma_C\mathit{COND}^\ssucc(R')+\sum_{i\in N} \gamma_id_i(R)$ for all profiles $R\in\mathcal{D}_C$, where $\gamma_C=\gamma_C^x$ and $\gamma_i=\gamma_i^x$ for all $i\in N$ and some alternative $x\in A$. As our next step, we fix an alternative $a\in A$ and consider the domain $\mathcal{D}_C^{\ssucc,a}$ that precisely consists of the profiles $R$ such that $a$ is the Condorcet winner in $(R,\ssucc)$. For this domain, we prove in three steps that $f(R)=\gamma_C\mathit{COND}^\ssucc(R)+\sum_{i\in N} \gamma_id_i(R)$ for all profiles $R\in\mathcal{D}_C^{\ssucc,a}$. Finally, we show that $\gamma_C\geq0$, which completes the proof of the theorem since $a$ is chosen arbitrarily.\medskip 
    
    \textbf{Step 1:} Consider two alternatives $a,b\in A$. We will show that $\gamma^a_C=\gamma^b_C$ and $\gamma_i^a=\gamma_i^b\geq 0$ for all voter $i\in I$. For this, let $i$ denote an arbitrary voter, $c$ an alternative in $A\setminus \{a,b\}$, and consider the profiles $R^1$ and $R^2$.
    
    \begin{profile}{L{0.06\columnwidth} L{0.3\columnwidth} L{0.3\columnwidth}}
    $R^1$: & $i: c,a,b\dots$ & $N\setminus \{i\}$: $a,b,c\dots$\\
    $R^2$: & $i: c,a,b\dots$ & $N\setminus \{i\}$: $b,a,c\dots$
    \end{profile}
    
    Clearly, $a$ is the Condorcet winner in $R^1$ and $b$ in $R^2$. Hence, $f(R^1,c)=\gamma_i^a$ and $f(R^2,c)=\gamma_i^b$. Moreover, since $\mathcal{D}_C^\ssucc$ is connected, there is an ad-path from $R^1$ to $R^2$ that does not move $c$. Thus, localizedness shows that $\gamma_i^a=\gamma_i^b$. Since this holds for all voters, we also infer that $\gamma_C^a=1-\sum_{i\in N} \gamma_i^a=1-\sum_{i\in N} \gamma_i^b=\gamma_C^b$. Now, let $\gamma_C=\gamma_C^a$ and $\gamma_i=\gamma_i^a$ for all voters $i\in N$ and some $a
    \in A$. It is easy to see that $f(R)=\gamma_C\mathit{COND}^\ssucc(R)+\sum_{i\in N} \gamma_i d_i(R)$ for all $R\in\mathcal{D}_C$. In particular, we use here that $\mathit{COND}^\ssucc(R)=\mathit{COND}(R)$ for all profiles $R\in\mathcal{D}_C$ if $n$ is even.\medskip
    
    \textbf{Step 2:} Fix a set of voters $I\subseteq N$ with $|I|=\frac{n}{2}$ and an alternative $a\in A$. In this step, we consider the domain $\mathcal{D}_1^{I,a}$ which consists of all profiles $R$ such that $a$ is the Condorcet winner in $(R,\ssucc)$ and top-ranked by all voters $i\in I$. Our goal is to show that $f(R,a)=\gamma_C\mathit{COND}^\ssucc(R,a)+\sum_{i\in N} \gamma_id_i(R,a)$ for all profiles $R\in\mathcal{D}_1^{I,a}$. 
    
    First, note that this claim follows immediately from Step 1 for all profiles $R\in\mathcal{D}_C\cap D_1^{I,a}$. Hence, we focus on profiles $R\in\mathcal{D}_1^{I,a}\setminus \mathcal{D}_C$. Since $a$ is by definition the Condorcet winner in $(R,\ssucc)$ for all these profiles $R$, there is an alternative $b\in A\setminus \{a\}$ such that $b \succ_i a$ for all voters $i\in N\setminus I$ and $a \ssucc b$. In particular, this means that no voter in $N\setminus I$ top-ranks $a$, and we thus need to show that $f(R,a)=\gamma_C+\sum_{i\in I} \gamma_i$ for all such profiles $R$. 
    
    For doing so, consider a profile $R^1\in\mathcal{D}_1^{I,a}\setminus \mathcal{D}_C$ and let $b$ denote an alternative such that $b \succ_i^1 a$ for all $i\in N\setminus I$ and $a\ssucc b$. Furthermore, let $R^2$ denote the profile derived from $R^1$ by letting all voters $i\in N\setminus I$ make $b$ into their favorite alternative, and all voters $i\in I$ make $b$ into their second best alternative (after $a$). Since $a$ is the Condorcet winner in $(R^1,\ssucc)$ and $U(\succ_i^1,a)=U(\succ_i^2,a)$ for all $i\in N$, it is also the Condorcet winner in $(R^2,\ssucc)$. Moreover, we can use connectedness and localizedness to derive that $f(R^1,a)=f(R^2,a)$ since we did not move $a$ for transforming $R^1$ to $R^2$. Next, observe that $f(R^2,x)=0$ for all $x \in A\setminus \{a,b\}$ as $b$ Pareto-dominates all those alternatives. Finally, consider the profile $R^3$, which we derive from $R^2$ by letting all voter $i\in N\setminus I$ make $a$ into their second best alternative. Connectedness implies that there is an ad-path from $R^2$ to $R^3$ along which $b$ is never swapped and thus $f(R^2,b)=f(R^3,b)$. On the other hand, \emph{ex post} efficiency still requires that $f(R^3,x)=0$ for $x\in A\setminus\{a,b\}$ and thus, $f(R^2,a)=1-f(R^2,b)=1-f(R^3,b)=f(R^3,a)$. 
    
    Now, to prove that $f(R^1,a)=f(R^3,a)=\gamma_C+\sum_{i\in I}\gamma_i$, we consider the following profiles, where $c$ is an arbitrary alternative in $A\setminus\{a,b\}$ and $i^*$ is a voter in $N\setminus I$. Note that all these profiles are in $\mathcal{D}_C^\ssucc$: for $R\in \{R^4,R^5,R^6\}$, it is easy to see that $a$ is the Condorcet winner in $(R,\ssucc)$ because $a\ssucc b$ and $n\geq 4$. For $R^7$, $c$ is the Condorcet winner in $(R^7,\ssucc)$ because it is top-ranked by all voters but one.
    
    \begin{profile}{L{0.06\columnwidth} L{0.2\columnwidth} L{0.35\columnwidth} L{0.2\columnwidth}}
        $R^4$: & $I$: $a,b,c,\dots$ & $N\setminus (I\cup \{i^*\})$: $b,c,a,\dots$ & $i^*$: $b,a,c,\dots$\\
        $R^5$: & $I$: $a,c,b,\dots$ & $N\setminus (I\cup \{i^*\})$: $c,b,a,\dots$ & $i^*$: $b,a,c,\dots$\\
        $R^6$: & $I$: $a,c,b,\dots$ & $N\setminus (I\cup \{i^*\})$: $c,b,a,\dots$ & $i^*$: $a,b,c,\dots$\\
        $R^7$: & $I$: $c,a,b,\dots$ & $N\setminus (I\cup \{i^*\})$: $c,b,a,\dots$ & $i^*$: $b,a,c,\dots$
    \end{profile}
    
    In $R^4$, all alternatives but $a$ and $b$ are still Pareto-dominated and thus have probability $0$. Moreover, $U(\succ^3_i,b)=U(\succ^4_i,b)$ for all $i\in N$, which implies that $f(R^4,b)=f(R^3,b)$ because $\mathcal{D}_C^\ssucc$ is connected and $f$ is localized. In turn, we derive that $f(R^4,a)=1-f(R^4,b)=1-f(R^3,b)=f(R^3,a)$. Next, we can again use localizedness and connectedness to infer that $f(R^5,a)=f(R^4,a)$. 
    
    As last step, note that $a$ is the Condorcet winner in $R^6$ and $c$ is the Condorcet winner in $R^7$. Hence, for $x\in \{6,7\}$, $R^x\in \mathcal{D}_C$ and $f(R^x)=\gamma_C\mathit{COND}^\ssucc(R^x)+\sum_{i\in N} \gamma_i d_i(R^x)$ by Step 1. This observation means that $f(R^6,c)=\sum_{i\in N\setminus (I\cup \{i^*\})} \gamma_i$ and $f(R^7,b)=\gamma_{i^*}$. Next, note that $U(\succ_i^5,c)=U(\succ^6_i,c)$ and $U(\succ_i^5,b)=U(\succ_i^7,b)$ for all $i\in N$. Hence, there is an ad-path from $R^5$ to $R^6$ (resp. $R^5$ to $R^7$) along which $c$ (resp. $b$) is never swapped because $\mathcal{D}_C^\ssucc$ is connected. We infer now from localizedness that $f(R^5,c)=f(R^6,c)=\sum_{i\in N\setminus (I\cup \{i^*\})} \gamma_i$ and $f(R^5,b)=f(R^7,c)=\gamma_{i^*}$. Moreover, \emph{ex post} efficiency requires that $f(R^5,x)=0$ for all $x\in A\setminus \{a,b,c\}$. We can therefore deduce that $f(R^5,a)=1-f(R^5,\{b,c\})=1-\sum_{i\in N\setminus I} \gamma_i=\gamma_C + \sum_{i\in I} \gamma_i$. Since $f(R^1,a)=f(R^3,a)=f(R^5,a)$, this proves Step~2.\medskip
    
    \textbf{Step 3:} For the third step, we fix a set of voters $I\subseteq N$ with $|I|=\frac{n}{2}$ and an alternative $a\in A$. We will show that $f(R)=\gamma_C\mathit{COND}^\ssucc(R)+\sum_{i\in N} \gamma_id_i(R)$ for all profiles $R\in\mathcal{D}_1^{I,a}$.
    
    For this, consider a profile $R^1\in\mathcal{D}_1^{I,a}$. If there is a voter $i\in N\setminus I$ who top-ranks $a$ in $R^1$, then $R^1\in\mathcal{D}_C$ and the claim follows from Step 1. On the other hand, if there is no such voter, consider the profile $R^2$ derived from $R^1$ by making $a$ into the second best alternative of all voters $i\in N\setminus I$. Since none of these voters top-rank $a$ in $R^1$, it holds that $U(\succ^2_i,a)\subseteq U(\succ^1_i, a)$ for all $i\in N$. In particular, we can transform $R^1$ into $R^2$ by only reinforcing $a$ against a single alternative in every step. Hence, there is an ad-path $(\hat R^1, \dots, \hat R^l)$ from $R^1$ to $R^2$ along which $a$ is only reinforced. In particular, it is easy to see that $a$ is the Condorcet winner in $(\hat R^k,\ssucc)$ for every profile $\hat R^k$ on this ad-path. We therefore do not leave the domain $\mathcal{D}_1^{I,a}$ during this process. Moreover, it follows from this observation that $f(\hat R^k,a)=\gamma_C+\sum_{i\in I}\gamma_i$ for all $k\in \{1,\dots, l\}$ because of the insights of Step 2. Next, localizedness shows that $f(\hat R^k,z)=f(\hat R^{k-1},z)$ for all alternatives $z\in A$ except the alternative $x$ that is weakened against $a$. However, since $f(\hat R^k,a)=f(\hat R^{k-1},a)$ the probability of $x$ cannot change either. This proves that $f(\hat R^k)=f(\hat R^{k-1})$ for all $k\in \{2,\dots,l\}$ and therefore that $f(R^1)=f(R^2)$. 
    
    Now, if $a$ is the Condorcet winner in $R^2$, then $f(R^1)=f(R^2)=\gamma_C\mathit{COND}^\ssucc(R^2)+\sum_{i\in N} \gamma_id_i(R^2)=\gamma_C\mathit{COND}^R(R^1) +\sum_{i\in N} \gamma_id_i(R^1)$, where the last equality follows because no voter changes his favorite alternative. On the other hand, if $R^2\not\in\mathcal{D}_C$, then all voters in $N\setminus I$ top-rank another alternative $b$. In particular, this means that all alternatives $x\in A\setminus \{a,b\}$ are Pareto-dominated by $a$ and thus, $f(R^1,b)=f(R^2,b)=1-f(R^2,a)=1-(\gamma_C+\sum_{i\in I}\gamma_i)=\sum_{i\in N\setminus I}\gamma_i$. Hence, it holds again that $f(R^1)=\gamma_C\mathit{COND}^\ssucc(R^1)+\sum_{i\in N} \gamma_id_i(R^1)$, which completes the proof of this step.\medskip
    
    \textbf{Step 4:} Next, we show that $f(R)=\gamma_C\mathit{COND}^\ssucc(R)+\sum_{i\in N} \gamma_id_i(R)$ for all profiles $R\in\mathcal{D}_C^{\ssucc,a}$. For this, consider an arbitrary profile $R$ in this domain and an alternative $x\in A\setminus \{a\}$. Since $a$ is the Condorcet winner in $(R,\ssucc)$, there are at least $\frac{n}{2}$ voters in $R$ who prefer $a$ to $x$. Now, consider the profile $R'$ derived from $R$ by letting these voters make $a$ into their best alternative. By the connectedness of $\mathcal{D}_C^\ssucc$, there is an ad-path from $R$ to $R'$ along which $x$ is never moved. Hence, localizedness implies that $f(R,x)=f(R',x)$. On the other hand, $R'\in\mathcal{D}_1^{I,a}$ for some set $I$ and thus, $f(R',x)=\sum_{i\in N}\gamma_i d_i(R',x)$. Moreover, $x$ was not top-ranked by any of the voters who reinforced $a$ because $a \succ_i' x$. Therefore, $f(R,x)=\sum_{i\in N}\gamma_i d_i(R,x)$. Since $x$ is chosen arbitrarily in $A\setminus\{a\}$, we infer that $f(R,x)=\sum_{i\in N}\gamma_i d_i(R,x)$ for all $x\in A\setminus \{a\}$ and $R\in\mathcal{D}_C^{\ssucc,a}$. This implies that $f(R,a)=1-\sum_{x\in A\setminus \{a\}} f(R,x)=1-\sum_{i\in N\colon r(\succ_i,a)>1} \gamma_i=\gamma_C+\sum_{i \in N\colon r(\succ_i,a)=1} \gamma_i$. We thus derive that $f(R)=\gamma_C\mathit{COND}^\ssucc(R)+\sum_{i\in N} \gamma_id_i(R)$ for all profiles $R\in\mathcal{D}_C^{\ssucc,a}$.\medskip
    
    \textbf{Step 5:} Finally, we show that $\gamma_C\geq 0$. For this, let $a,b,c$ denote three distinct alternatives with $a \ssucc b \ssucc c$. Moreover, we partition the voters in three disjoint sets $I^1$, $I^2$, and $I^3$ with $|I^1|=|I^2|=\frac{n}{2}-1$ and $|I^3|=2$. Since $n\geq 4$, none of these sets is empty. Now, consider the following three profiles, where $I^3=\{i,j\}$.
    
    \begin{profile}{L{0.06\columnwidth} L{0.2\columnwidth} L{0.2\columnwidth} L{0.18\columnwidth} L{0.18\columnwidth}}
    $R^1$: & $I^1: a,b,c\dots$ & $I^2$: $b,a,c\dots$ & $I^3$: $c,a,b$\\
    $R^2$: & $I^1: a,b,c\dots$ & $I^2$: $b,a,c\dots$ & $i$: $c,a,b$ & $j$: $c,b,a$ \\
    $R^3$: & $I^1: a,b,c\dots$ & $I^2$: $b,a,c\dots$ & $I^3$: $c,b,a$
    \end{profile}
    
    All three profiles are in $\mathcal{D}_C^\ssucc$: in $(R^1,\ssucc)$ and $(R^2,\ssucc)$, $a$ is the Condorcet winner, and $b$ is the Condorcet winner in $(R^3,\ssucc)$. Next, observe that two applications of non-perversity show that $f(R^1,a)\geq f(R^3,a)$. On the other hand, we have that $f(R^1,a)=\gamma_C+\sum_{i\in I^1}\gamma_i$ and $f(R^3,a)=\sum_{i\in I^1}\gamma_i$. Hence, it follows immediately that $\gamma_C\geq 0$. This completes the proof of Claim (1) because we already know that $f(R)=\gamma_C\mathit{COND}^\ssucc(R)+\sum_{i\in N}\gamma_id_i(R)$ for all profiles $R\in\mathcal{D}_C^{\ssucc,x}$ and alternatives $x\in A$ and $\mathcal{D}_C^\ssucc=\bigcup_{x\in A} \mathcal{D}_C^{\ssucc,x}$.\medskip
    
    \textbf{Proof of Claim (2)}: For the second claim, suppose that $\mathcal{D}_C^\ssucc\subsetneq\mathcal{D}$. It is obvious that random dictatorships are strategyproof and non-imposing on $\mathcal{D}$. We thus focus on the converse claim and let $f$ denote a strategyproof and non-imposing SDS on $\mathcal{D}$. First, by the reasoning before the proof of Claim (1), it follows that $f$ is \emph{ex post} efficient. Consequently, $f$ is strategyproof and non-imposing on $\mathcal{D}_C^\ssucc$. We can therefore use Claim (1) to infer that there are values $\gamma_C\geq 0$ and $\gamma_i\geq 0$ for all $i\in N$ such that $f(R)=\gamma_C\mathit{COND}^\ssucc(R)+\sum_{i\in N} \gamma_i d_i(R)$ for all $R\in\mathcal{D}_C^\ssucc$. We thus have to show that $\gamma_C=0$ and to extend this representation to the larger domain $\mathcal{D}$.
    
    For this, consider a profile $R\in\mathcal{D}\setminus \mathcal{D}_C^\ssucc$ and an alternative $x\in A$. We define $B=\{y\in A\colon g_{(R, \ssucc)}(y,x)> 0\}$ and note that $B\neq \emptyset$ because there is no Condorcet winner in $(R,\ssucc)$. Let $y$ denote the maximal alternative in $B$ according to $\ssucc$, i.e., $y\ssucc z$ for all $z\in B\setminus \{y\}$. Moreover, we define $I=\{i\in N\colon y\succ_i x\}$ as the set of voters who prefer $y$ to $x$ in $R$ and note that $|I|\geq \frac{n}{2}$. Next, consider the profile $R'$ derived from $R$ by letting all voters $i\in I$ make $y$ into their best alternative and all voters $i\in N\setminus I$ move up $y$ until it is directly below $x$. If $|I|>\frac{n}{2}$, then $y$ is the Condorcet winner in $(R',\ssucc)$ and $R'\in\mathcal{D}_C^\ssucc$. If $|I|=\frac{n}{2}$, it must hold that $y\ssucc x$ as otherwise $g_{(R,\ssucc)}(y,x)<0$. We claim also in this case that $y$ is the Condorcet winner in $(R',\ssucc)$. If this was not the case, there is an alternative $z$ such that $g_{(R',\ssucc)}(z,y)>0$. Since all voters $i\in I$ top-rank $y$ in $R'$, this means that $z\succ_i' y$ for all $i\in N\setminus I$ and $z\ssucc y$. Moreover, because all voters $i\in N\setminus I$ rank $y$ directly below $x$, it follows that $z\succ_i x$ for all these voters. However, we then have that $z\in B$ and $z\ssucc y$, which contradicts the the definition of $y$. This proves that $R'\in\mathcal{D}_C^\ssucc$, and we can now use the same analysis as for proof of Claim (2) in \Cref{thm:cond} to complete the proof.
\end{proof}

As last result, we discuss the proof of \Cref{thm:groupCond}.

\groupCond*
\begin{proof}
For the direction from right to left of both claims, we refer to the proof sketch in the main body as it discusses in detail why the Condorcet rule and dictatorships satisfy the required axioms. We focus here on the direction from left to right and consider therefore a group-strategyproof and non-imposing SDS $f$ on a domain $\mathcal{D}$ with $\mathcal{D}_C\subseteq\mathcal{D}$. First, we show by contradiction that $f$ is \emph{ex post} efficient. Hence, assume that there are alternatives $x,y\in A$ and a profile $R\in\mathcal{D}$ such that $f(R,y)>0$ even though $x$ Pareto-dominates $y$ in $R$. On the other hand, there is a profile $R'$ such that $f(R',x)=1$ by non-imposition. It is now easy to see that the set of all voters $N$ can group-manipulate by deviating from $R$ to $R'$ because all voters prefer $x$ to $y$. The initial assumption is therefore wrong and $f$ satisfies \emph{ex post} efficiency.

Since group-strategyproofness implies strategyproofness, we can now invoke \Cref{lem:DCx} to derive that for every alternative $x\in A$, there are values $\gamma_C^x$ and $\gamma_i^x\geq 0$ for all $i\in N$ such that $f(R)=\gamma_C^x\mathit{COND}(R)+\sum_{i\in N} \gamma_i^x d_i(R)$ for all $R\in\mathcal{D}_C^x$. We show next that $\gamma_i^x=\gamma_i^y$ for all voters $i\in N$ and alternatives $x,y\in A$. For this, consider the profiles $R^1$ and $R^2$ shown below, where $i$ is an arbitrary voter and $a,b,c$ denote three distinct alternatives.

\begin{profile}{L{0.1\columnwidth} L{0.3\columnwidth} L{0.3\columnwidth}}
		$R^1$: & $i$: $c,a,b,\dots$ & $N\setminus \{i\}$: $a,b,c,\dots$\\
		$R^2$: & $i$: $c,a,b,\dots$ & $N\setminus \{i\}$: $b,a,c,\dots$
	\end{profile}

Since $n\geq 3$, it holds that $a$ is the Condorcet winner in $R^1$ and $b$ in $R^2$. This entails that $f(R^1,c)=\gamma_i^a$ and $f(R^2,c)=\gamma_i^b$. Finally, group-strategyproofness implies that $f(R^1, \{a,b\})=f(R^2,\{a,b\})$ because otherwise, the group $N\setminus \{i\}$ can group-manipulate by deviating from $R^1$ to $R^2$ or \emph{vice versa}. For instance, if $f(R^1, \{a,b\})<f(R^2, \{a,b\})$, then $f(R^1)\not\succsim_j^\sd f(R^2)$ for all $j\in N\setminus \{i\}$, which proves that this is indeed a group-manipulation. Since all other alternatives are Pareto-dominated, we thus infer that $\gamma_i^a=f(R^1,c)=f(R^2,c)=\gamma_i^b$. Analogous to the proof of \Cref{thm:cond}, it follows now that $\gamma_C^a=\gamma_C^b$ and there are thus $\gamma_C$ and $\gamma_i\geq 0$ for $i\in N$ such that $f(R)=\gamma_C\mathit{COND}(R)+\sum_{i\in N}\gamma_i d_i(R)$ for all $R\in\mathcal{D}_C$.

Next, we show that if $\gamma_i>0$, then $\gamma_i=1$. Assume for contradiction that this is not the case, i.e., there is a voter $i\in N$ with $\gamma_i>0$ and $\gamma_i\neq 1$. To derive a contradiction, we consider the profiles $R^3$ and $R^4$ shown below.

\begin{profile}{L{0.1\columnwidth} L{0.3\columnwidth} L{0.3\columnwidth}}
		$R^3$: & $i$: $c,a,b,\dots$ & $N\setminus \{i\}$: $b,a,c,\dots$\\
		$R^4$: & $i$: $a,b,c,\dots$ & $N\setminus \{i\}$: $a,b,c,\dots$
	\end{profile}

Since $b$ is the Condorcet winner in $R^3$ and $0<\gamma_i$, we have that $f(R^3,c)=\gamma_i>0$. In particular, this implies that $\gamma_i\leq 1$ as $f$ is otherwise not well-defined. In turn, our contradiction assumption entails that $\gamma_i<1$ and therefore $f(R^3,b)=1-f(R^3,c)>0$ because $R^3\in\mathcal{D}_C$. On the other hand, \emph{ex post} efficiency shows that $f(R^4,a)=1$. However, this means that the group of all voters can group-manipulate by deviating from $R^3$ to $R^4$ because $f(R^3,U(\succ^3_j,a))<1=f(R^4,U(\succ^3_j,a))$ for all voters $i\in N$. This contradicts that $f$ is group-strategyproof and thus, if $\gamma_i>0$, then $\gamma_i=1$. Furthermore, it is not possible that $\gamma_i=1$ and $\gamma_j=1$ for distinct voters $i,j$. This follows by considering the profile $R^5\in\mathcal{D}_C$ shown below: if both $\gamma_i=1$ and $\gamma_j=1$, then $f(R^5,b)=f(R^5,c)=1$ which violates the definition of an SDS.

\begin{profile}{L{0.1\columnwidth} L{0.3\columnwidth} L{0.2\columnwidth} L{0.2\columnwidth}}
		$R^5$: & $N\setminus \{i,j\}$: $a,b,c,\dots$ & $i$: $b,a,c,\dots$ & $j$: $c,a,b,\dots$
	\end{profile}

As a consequence, we infer for all profiles $R\in\mathcal{D}_C$ that either $f(R)=d_i(R)$ for some $i\in N$ or $f(R)=\mathit{COND}(R)$. This proves Claim (1): only dictatorial SDSs and the Condorcet rule are group-strategyproof and non-imposing on the Condorcet domain. 

For proving the second claim, we assume that there is a profile $R^*\in\mathcal{D}$ such that for every alternative $x\in A$, there is another alternative $y\in A\setminus \{x\}$ such that $g_{R^*}(y,x)>0$. Now, let $a$ denote an alternative such that $f(R^*,a)>0$. Moreover, let $b$ denote an alternative with $g_{R^*}(b,a)>0$ and let $I$ denote the set of voters with $b\succ^*_i a$. We consider the profile $R'$ derived from $R^*$ by letting all voters $i\in I$ make $b$ into their best alternative. Clearly, $b$ is the Condorcet winner in $R'$ because $|I|>\frac{n}{2}$, which entails that $R'\in\mathcal{D}$. If $f(R')=\mathit{COND}(R')$, the voters $i\in I$ can group-manipulate by deviating from $R$ to $R'$ because they all prefer $b$ to $a$. Hence, $f(R)$ is not the Condorcet rule for profiles on $\mathcal{D}_C$, which means that there is a voter $i\in N$ such that $f(R)=d_i(R)$ for all $R\in\mathcal{D}_C$. 

For completing the proof, we need to show that voter $i$ dictates the outcome in all profiles. For doing so, consider an arbitrary profile $R\in\mathcal{D}$ and let $x$ denote voter $i$'s favorite alternative in $R$. We suppose for contradiction that $f(R,x)<1$. Now, consider the profile $R'$ in which all voters $j\in N\setminus \{i\}$ prefer an alternative $y\in A\setminus \{x\}$ the most and $x$ the least, and voter $i$ reports $\succ_i$. Clearly, $y$ is the Condorcet winner in $R'$ and thus $f(R',x)=d_i(R',x)=1$. Now, it is easy to see that if $f(R,x)<1$, then the voters in $N\setminus \{i\}$ can group-manipulate by deviating from $R'$ to $R$ because they prefer every other lottery to $f(R')$. Hence, the assumption that $f(R,x)<1$ contradicts the group-strategyproofness of $f$. This proves that $f(R)=d_i(R)$ for all $R\in\mathcal{D}$ and thus proves Claim (2).
\end{proof}
\end{document}